\documentclass[9pt,onecolumn,superscriptaddress,floatfix,nofootinbib]{revtex4-2}


\usepackage{graphicx}%
\usepackage{multirow}%
\usepackage{lipsum}
\usepackage{amsmath}
\usepackage{amsfonts, amssymb, bbm, braket, accents}
\usepackage{graphicx}

\usepackage{booktabs}
\usepackage[dvipsnames]{xcolor}
\usepackage{braket}
\usepackage{quantikz}
\usepackage{soul}
\usepackage{float}
\makeatletter
\let\newfloat\newfloat@ltx
\makeatother
\usepackage{algorithm}
\usepackage{array}
\usepackage[makeroom]{cancel}
\usepackage{booktabs}
\usepackage{xcolor}
\usepackage{siunitx}
\usepackage{float}

\definecolor{cellmin}{rgb}{1,1,1}
\definecolor{cellmax}{rgb}{0.25,1,0.25}

\newcommand{\probP}{\text{I\kern-0.15em P}}
\newcommand{\probE}{\text{I\kern-0.15em E}}

\usepackage[breaklinks=true]{hyperref}
\hypersetup{
  colorlinks   = true,
  urlcolor     = blue,
  linkcolor    = blue,
  citecolor   = red
}
\usepackage{amsthm}

\usepackage[capitalise]{cleveref}
\crefformat{equation}{Eq.~(#2#1#3)}
\crefformat{section}{Sec.~#2#1#3}
\Crefformat{equation}{Equation~(#2#1#3)}
\crefformat{figure}{Fig.~#2#1#3}
\crefrangeformat{equation}{Eqs.~#3(#1)#4--#5(#2)#6}
\Crefformat{section}{Section~#2#1#3}

\newtheorem{theorem}{Theorem}
\newtheorem{corollary}[theorem]{Corollary}%
\newtheorem{definition}{Definition}%

\begin{document}
\title{Expressivity of IQP Feature-Map Quantum Kernels via Reproducing Kernel Hilbert Spaces}


\author{Valeria Repetto}
\email{valeria.repetto@iit.cnr.it}
\affiliation{National Research Council of Italy (CNR), Institute of Informatics and Telematics (IIT),via Giuseppe Moruzzi 1,Pisa, Italy}
\affiliation{Computational Biomedicine Unit, Department of Medical Sciences, University of Turin,via Santena 19,Torino,Italy}
\author{Stefano Melchionna}
\affiliation{Kipu Quantum, Greifswalderstrasse 212, 10405 Berlin, Germany}

\author{Romina D'Aurizio}
\affiliation{National Research Council of Italy (CNR), Institute of Informatics and Telematics (IIT),via Giuseppe Moruzzi 1,Pisa, Italy}

\date{\today}

\begin{abstract}
Quantum kernels are among the most promising methods in quantum machine learning, providing a natural way to embed data into high-dimensional Hilbert spaces via quantum circuits and capturing non-linear relationships in data. Their learning behaviour is governed by the function class they represent, which is formally characterized by the structure of the associated Reproducing Kernel Hilbert Space (RKHS).  However, a theoretical characterization of this function class, and of how it depends on the choice of feature map and its hyperparameters, is still lacking. In this work, we focus on the Instantaneous Quantum Polynomial (IQP) feature map and derive a Fourier representation of the associated RKHS generators as finite sums of trigonometric functions evaluated on polynomially transformed input variables, with a finite set of accessible frequencies determined by the circuit construction and scaled by the bandwidth hyperparameter. This representation identifies three bandwidth regimes with qualitatively distinct expressive properties. At small bandwidth the generators are localised and the kernel behaves comparably to a Gaussian kernel; at intermediate values they become multimodal, assigning high similarity to points that are distant in input space; at large values their periodic character dominates, and the kernel resolves periodic targets whose frequencies match those induced by the feature map. Numerical experiments on synthetic datasets with controlled periodic and multimodal structure, across data dimensions and feature-map configurations, reproduce the three regimes in both supervised classification and unsupervised anomaly detection, benchmarked against a classical Gaussian kernel.
These results make the relationship between expressivity and data structure explicit for IQP kernels: expressive power translates into a practical benefit only when the accessible frequencies match with the structure of the data, which provides a criterion for selecting the bandwidth when the periodic or multimodal structure of data is known.
\end{abstract}

\maketitle

\section{Introduction}\label{sec:Introduction}

Quantum Machine Learning (QML) is a rapidly growing field  that has attracted significant attention thanks to its potential to unlock new representational power beyond classical means \cite{Biamonte2016QuantumML}. In this landscape Quantum Kernels (QKs) have emerged as a promising method in QML \cite{Fuster_2024, Schuld2018QuantumML, JaegerKrems2023} since they offer an effective means of capturing  non-linear relationships in data by evaluating pairwise similarities within high-dimensional feature spaces naturally induced by quantum state encoding. Their versatility allows their application across a broad range of tasks, including classification \cite{Havlek2018SupervisedLW, SchnabelReview2025}, regression \cite{PaineQKMForDiffEQ2023}, clustering \cite{Repetto2025}, and anomaly or novelty detection \cite{Liang2019QuantumAD_OCSVM}. QKs have been tested so far in different domains including biology \cite{Vasques2023}, precision medicine \cite{Repetto2025}, solving differential equations \cite{PaineQKMForDiffEQ2023}, cosmology \cite{Peters2021MachineLO}, and finance \cite{miyabe2023IBM_HSBC, HSBC_2025, Grossi_QSVM_2022, Djehiche2021}.
There are empirical studies reporting both advantages and disadvantages of QKs with respect to classical kernel methods which depend on the dataset, feature map, and choice of hyperparameters \cite{Schuld_QA_2022, SchnabelReview2025}. This variability highlights the need to understand when the function class induced by a QK is useful for a particular learning problem and how the choice of feature-map hyperparameters should depend on the structural properties of the data. 
In this respect, \cite{Kbler2021TheIB} showed that the learning properties of QKs depend not only on the complexity and dimensionality of the underlying feature space, but also on how the functions represented by the kernel relate to the structure of the learning problem.
From a complementary perspective, \cite{Belis2026} suggested Fourier analysis as a natural framework for characterizing the frequency content of learning models.
Yet for one of the most widely studied families of QKs, namely those generated by the Instantaneous Quantum Polynomial (IQP) feature map \cite{Havlek2018SupervisedLW}, a theoretical characterization of the corresponding Reproducing Rernel Hilbert Space (RKHS) generators and of the role played by the feature-map hyperparameters is still lacking. This is particularly relevant for the bandwidth hyperparameter which controls the rescaling of the input data and has been linked to the model's ability to generalise \cite{Shaydulin2021ImportanceOK}, an effect later related to the kernel's eigenvalue spectrum and empirically confirmed on IQP kernels \cite{Canatar2022BandwidthEG}.
In this study, we address this gap by extending the framework of \cite{Schuld_data_encoding} to derive an explicit Fourier representation of the RKHS generators associated with IQP feature-map kernels. Our investigation shows that these functions are finite sums of trigonometric terms evaluated on polynomially transformed input variables. Additionally, we show how the bandwidth hyperparameter determines the characteristic frequencies of the kernel by identifying three distinct bandwidth regimes with qualitatively different expressive properties. Finally, we corroborate our theoretical results through numerical experiments on synthetic datasets with controlled structural characteristics (periodicity and multimodality) in both supervised classification and unsupervised anomaly detection tasks, using a classical Gaussian kernel as a baseline.
The paper is organised as follows. In Section \ref{sec:background}, we introduce the notation, definitions, and previous findings which constitute the foundation of our further investigation. Section \ref{sec:RHKS_fourier} presents the main theoretical results: we prove a Fourier sum representation of the RKHS generators and discuss the effect of the bandwidth parameter on the expressivity of kernel models. We then describe numerical experiments associated with the theoretical findings in Section \ref{sec:num_exp}. The last section is dedicated to the conclusions. 
\section{Background on kernel methods}\label{sec:background}
In this section, we introduce the notation, definitions, and related works  that will be used throughout the remainder of the paper. We begin by recalling the basic concepts of kernel methods, followed by a description of the connection between the RKHS and model expressivity. Finally, we introduce and discuss the properties of the IQP feature map based kernel, and of the classical Gaussian kernel, used as the classical counterpart for the numerical experiments (Section \ref{sec:num_exp}).

\subsection{Kernel methods}
Kernel-based learning methods leverage the idea of a similarity measure between data points \cite{Scholkopf2001}.
\begin{definition} [Kernel]
    A kernel is a positive semi-definite bi-variate function 
    $$
    k: \mathcal{X} \times\mathcal{X} \to \mathbb{R} 
    $$    
    defined on an input domain $\mathcal{X}$.
\end{definition}
Every positive semidefinite kernel function can be interpreted as the inner product of data points mapped into a (possibly infinite-dimensional) Hilbert feature space $\mathcal{H}$ by a feature map $\phi:\mathcal{X}\rightarrow\mathcal{H}$, namely
\begin{equation}\label{eq_inner_pro}
k(x_1, x_2) = \langle \phi(x_1), \phi(x_2) \rangle_\mathcal{H} .
\end{equation}
The practical consequence of \eqref{eq_inner_pro}, known as the \emph{kernel trick}, is that rather than constructing the feature map explicitly, kernel methods operate directly on kernel evaluations, thereby avoiding computations in the feature space while implicitly exploiting its geometry \cite{Scholkopf2001}. This enables linear algorithms in $\mathcal{H}$ to represent highly non-linear relationships in the original input space. Many kernel-based learning algorithms can therefore be expressed exclusively in terms of kernel evaluations. Given a training set $\{x_i\}_{i=1}^n$, a kernel model takes the form
\begin{equation}\label{eq_linear_kernel_model}
    m(x) = z \left( \sum_i \alpha_i k(x_i, x) + \beta \right)
\end{equation}
where $z: \mathbb{R} \to \mathcal{Y}$ is some function (e.g., the sign function in a binary classification problem) and $\alpha_i$ and $\beta$ are real valued coefficients learned via model fitting. Note that the formulation described by (\eqref{eq_linear_kernel_model}) includes a variety of models such as Support Vector Machines (SVMs) \cite{SVM} and One-Class SVMs (OCSVM) \cite{Schoelkopf_OCSVM}. \\
Here and in the rest of the paper, we assume our input feature domain to be $\mathcal{X} = [0,1]^D \subset \mathbb{R}^D$ for $D \in \mathbb{N}$, i.e. bounded in the unit hypercube, without being overly restrictive, as one can always rescale input data. We assume the output space $\mathcal{Y}$ to be a subset of $\mathbb{R}^M$. Examples are $\mathbb{R}$ for regression problems or anomaly detection (where the output is interpreted as an anomaly score), the multi-binary set $\{0,1\}^M$ for classification and clustering problems with $M$ indicating the number of classes or clusters. \\
\subsection{Model expressivity and RKHS}
The notion of \emph{expressivity} is widely used in ML, although no universally accepted mathematical definition exists. For QK methods it is often associated with the dimension of the Hilbert space into which data are embedded. However, a large embedding space does not by itself imply better learning since only the image of the feature map is accessible, and, more importantly, the functions favoured by the kernel have to be aligned with the learning task \cite{Kbler2021TheIB}. 
Throughout this work, we adopt a functional notion of expressivity: for a fixed kernel, we call expressivity the class of functions that can be represented by the corresponding kernel model \eqref{eq_linear_kernel_model} when its trainable coefficients and hyperparameters vary. Under this definition, a more expressive kernel can represent a broader class of target functions. 
Function classes are only partially ordered by inclusion, so expressivity cannot be quantified by a single scalar measure.
For the kernel models considered in this work, this notion of expressivity is naturally characterised through the RKHS associated with the kernel (see \cite{petruccione_book}, Definition 6.4).
\begin{definition} [RKHS]
    The RKHS of a kernel $k$ over $\mathcal{X}$ is the Hilbert space created by completing the span of functions $f : \mathcal{X} \to \mathbb{R}$, $f(\cdot) = k(x,\cdot)$ for $x \in \mathcal{X}$ (i.e., including the limits of all Cauchy sequences). The inner product is defined by 
    $$
    \langle f, g \rangle = \sum_{i,j} a_i b_j k(x_i, x_j) 
    $$
    for two functions $f(\cdot) = \sum_i a_i k(x_i,\cdot))$ and $g(\cdot) = \sum_j b_j k(x_j,\cdot))$ and extended by linearity. The functions $k(x,\cdot)$ are called generators.  
\end{definition}
The RKHS describes the space of functions which are achievable as (limits of) linear combinations of kernel generators and it determines which functions a kernel model can represent. Properties of the generators $k(x, \cdot)$ thus affect the properties of kernel models and therefore their expressivity.  

\subsection{Gaussian Kernels}
Several different classical kernels are commonly used in ML. The Gaussian kernel, also known as Radial Basis Functions (RBFs) kernel, is one of the most popular and widely used classical kernels. Its popularity stems from its strong performance across a variety of  tasks, as well as its ease of implementation and hyperparameter tuning. Additionally, RBF kernels account for some desirable properties discussed below. For these reasons we employed Gaussian kernel models as the classical case for our experiments in the following sections. \\

The generators of the RKHS for Gaussian kernels are Gaussian functions:
$$
    \exp\left(-\gamma \|x_1 - x_2\|_2^2\right), \qquad \gamma \in \mathbb{R}_+.
$$
As a consequence Gaussian kernel models are translation and rotation invariant, their kernel matrix is a similarity matrix, where every entry is bounded and the similarity of any point with itself is maximal, i.e., $k(x_1,x_1)\geq k(x_1,x_2)$ for all $x_1,x_2 \in \mathcal{X}$. The Gaussian RBF kernel represents two data points $(x_1,x_2)$ as similar ($k(x_1,x_2)$ is high) only if they are close in the original feature space $\mathcal{X}$. The strength of the interaction with respect to the euclidean distance between points in the space $\mathcal{X}$ is mediated by the hyperparameter $\gamma$.
Generators of RKHS for Gaussian kernels are displayed in Figure \ref{fig:RKHS_gaussian}.
\begin{figure}[!h]
    \centering
    \includegraphics[width=0.9 \textwidth]{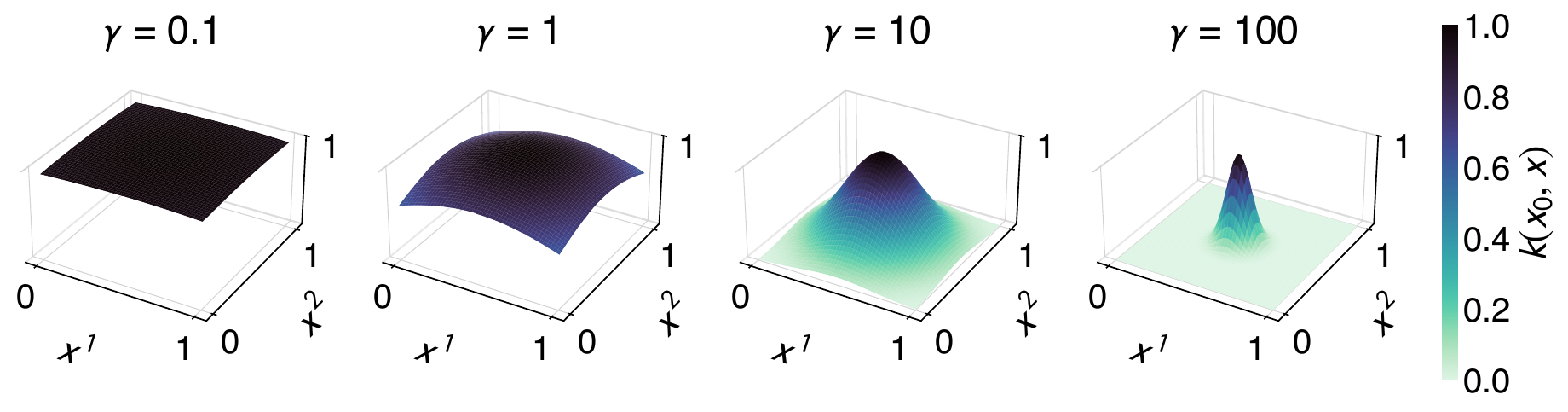}
    \caption{Gaussian RKHS generators for different values of $\gamma$. Each plot displays the values of the kernel $k(x, c)$ where $c$ is fixed at the centre of the square and $x = (x^1, x^2)$ where $x^1$ and $x^2$ are the Cartesian coordinates on the horizontal plane.}
    \label{fig:RKHS_gaussian}
\end{figure}

\subsection{Quantum Kernels and IQP feature map} \label{sec:quantum_kernels}
In QK the input, data $x$ are encoded, via an encoding function $\phi(x)$, in a quantum state. Formally, this encoding process is realised through the application of a unitary transformation $U_\phi (x)$ acting on $D$ qubits in the initial ground state $\ket{0} ^{\otimes D} $ 
$$
\phi : x \to \ket{\phi(x)} = U_{\phi (x)} \ket{0} ^{\otimes D} 
$$
Given $x_1, x_2 \in \mathcal{X}$ the kernel is computed as 
$$
k(x_1, x_2) = \bra{\phi(x_2)}\ket{ \phi(x_1)}^2 = \bra{0}^{\otimes D} U_{\phi (x_2)}^{\dagger} U_{\phi (x_1)} \ket{0} ^{\otimes D}
$$

A wide range of choices is available for the feature map $\phi$, and selecting an appropriate quantum embedding is a non-trivial task as it determines both the structural characteristics of the kernel and the practical implementability of the resulting model \cite{Cerezo2022ChallengesAO}.
In this work, we adopt the Instantaneous Quantum Polynomial (IQP) feature map introduced in \cite{Havlek2018SupervisedLW}. Besides being one of the most studied quantum feature maps, it yields fidelity kernels that are believed to be hard to simulate classically \cite{Incudini2024TowardUQ}.
The IQP feature map embeds an input vector $x = (x^1, \dots, x^D)$ into a quantum state by first applying a layer of Hadamard gates, followed by data-dependent single-qubit rotations whose angles are determined by the individual input components $x^i$. These are combined with entangling operations and additional single-qubit rotations whose angles depend on pairs of input components $x^i, x^j$. Different choices of the entangling pattern lead to different feature maps. The two most commonly adopted architectures are \emph{linear entanglement}, in which each qubit interacts only with its nearest neighbours, and \emph{full entanglement}, in which every pair of qubits is entangled. 
The corresponding unitary $U_{\phi(x)}$  is given by the following equation:
$$
U_{\phi(x)} = \prod_{[i,j] \in S} R_{Z_i, Z_j}(x_i, x_j) \bigotimes_{k=1}^D R_{Z_k}(x_k)
$$
where  $R_{Z_i, Z_j}$  and $R_{Z_k}$  represent two and one qubits rotations and $S$ is the set of gates to be entangled. 
IQP style feature maps can be applied $r$ times, called repetitions, allowing for a  more complex encoding. An high level representation of the IQP circuit module is shown in Figure  \ref{fig:IQP} (a) for linear entanglement and Figure  \ref{fig:IQP} (b) for full entanglement.
Encoding a point $x \in R^D$ into a quantum state $\ket{\phi(x)}$ projects the input data into a higher dimensional Hilbert space \cite{Schuld2021QuantumML}. Often the high dimensionality leads to kernel matrices with entries concentrated around the diagonal \cite{Thanasilp2024ExponentialCA, Suzuki2024} hindering subsequent model training. It has been shown \cite{Shaydulin2021ImportanceOK, Canatar2022BandwidthEG} that the introduction of a tunable hyperparameter $b \in\mathbb{R}_+ $, called bandwidth, controlling how input data are rescaled before being encoded into the quantum feature map, can enhance QK models generalisation capabilities.
Points in the original domain $[0,1]^D$ are mapped to $[0, b\pi]^D$ via $x \;\mapsto\; b\pi x$. For a single feature $x^i \in [0,1]$, the corresponding single-qubit encoding applies a rotation of angle $2\pi b x^i$. Hence, the encoding is periodic with period $1/b$ and frequency $b$. The bandwidth tunes the sensitivity of the kernel to variations in the data. For small $b$ (low frequency), the encoding varies slowly with the input. Nearby points in the input space are mapped to very similar quantum states, leading to smoother kernels that capture coarse-scale structures. For large $b$ (high frequency), the encoding varies rapidly with the input. Even small differences in the data may result in very different quantum states, making the kernel sensitive to fine details. At the same time, in this regime, periodic effects also become visible due to the periodicity of the qubit rotations.

\begin{figure*}
\includegraphics[width=1 \textwidth]{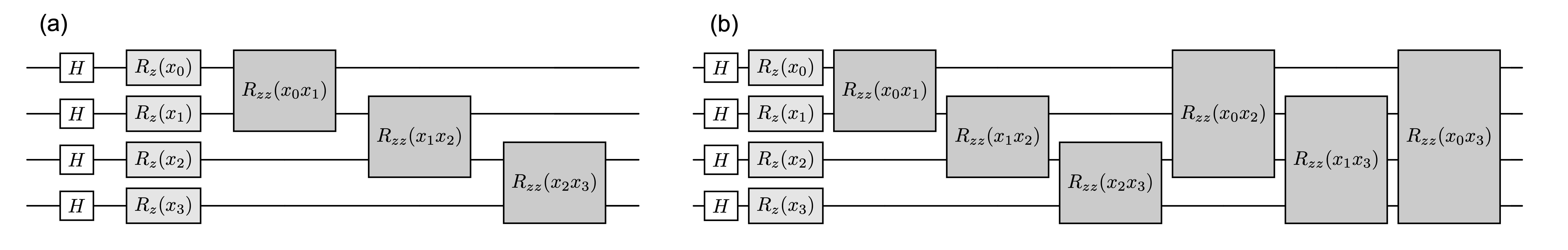}
\caption{Graphic high-level circuit representation of 4 qubit IQP Feature Map with (a) linear entanglement and (b) full entanglement.}
  \label{fig:IQP}
\end{figure*}

\section{Partial Fourier sum representation of IQP-feature maps based kernels} \label{sec:RHKS_fourier}

In this section, we prove that the RKHS generator of IQP-feature map based kernels is a combination of trigonometric functions of the input data $x_i$ and of the products $x_ix_j$. This result is formalized in the Theorem \ref{theorem:RHKS_fourier} and Corollary \ref{corollary:IQP_RHKS_fourier} which follow and generalise the arguments outlined in \cite{Schuld_data_encoding} and \cite{petruccione_book}. \\

\begin{theorem} \label{theorem:RHKS_fourier}
    Let $x \in \mathbb{R}^D$ and $\phi(x)$ be a feature map acting on $D$ qubits which alternates $L \in \mathbb{N}$ layers, each composed by data independent unitaries $W^{l}$ acting on multiple qubits and data embedding layers $S^{l}(x) = \bigotimes _{d=1}^DS^{l}_d(x)$ acting independently on each single qubit:
    \begin{equation} \label{eq:IQP_layers_decomposition}
    \phi(x) = W^{L+1}S^L(x)W^{L}....S^1(x)W^{1}.
    \end{equation}
    Assume 
    \begin{equation} \label{eq:Sl}
        S^{l}(x) = e^{-ic^l_1(x)H}\otimes ... \otimes e^{-ic^l_D(x)H} 
    \end{equation}
    where $H$ is a fixed single-qubit operator with eigenvalues  $\lambda_1$  and $\lambda_2$, and the functions $c_d^l : \mathbb{R}^D \to \mathbb{R}$ are arbitrary real-valued functions encodings. Let 
    $k(x,x^\prime) = \bra{0}^{\otimes D} \phi (x^{\prime}) \phi (x) \ket{0} ^{\otimes D}.$ \\

    Then the kernel $k$ can be written as     
    \begin{equation} \label{eq:fourier_rep_kernel}
        k(x,x^\prime) = \sum_{\mathbf{j}}\sum_{\mathbf{k}} \alpha^{*}_{\mathbf{k}}\alpha_{\mathbf{j}}e^{-i \left( \mathbf{c}_{\mathbf{j}}(x)\mathbf{\lambda_{\mathbf{j}}} - \mathbf{c}_{\mathbf{k}}(x^{\prime})\mathbf{\lambda_{\mathbf{k}}} \right) }
    \end{equation}
    where $\mathbf{j}, \mathbf{k} \in [1, \dots , D] ^L$ are multi-indexes, $\mathbf{c}_{\mathbf{j}} = (c^l_j)_{l \in \{1, ..., L \} , j \in \mathbf{j}}$, $\mathbf{\lambda_{\mathbf{j}}} $ is a vector whose entries are eigenvalues of $H$, and $\alpha_{\mathbf{j}}$ are coefficients independent of the data $x$.     
\end{theorem}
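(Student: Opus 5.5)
The approach follows the spectral argument of \cite{Schuld_data_encoding}: diagonalise the single-qubit generator, expand every encoding layer in the resulting product eigenbasis, and multiply out the layered product of \eqref{eq:IQP_layers_decomposition}.

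\textbf{Step 1: diagonalise each encoding layer.} Write $H = V^\dagger \Sigma V$ with $\Sigma=\mathrm{diag}(\lambda_1,\lambda_2)$. Then $e^{-ic^l_d(x)H} = V^\dagger e^{-ic^l_d(x)\Sigma}V$. Absorb the data-independent unitaries $V^{\otimes D}$ and $(V^\dagger)^{\otimes D}$ into the neighbouring $W^l$ and $W^{l+1}$, keeping the names $W^l$. Each $S^l(x)$ is now diagonal in the computational basis $\{\ket{\mathbf{s}}\}$, $\mathbf{s}\in\{1,2\}^D$:
\begin{equation*}
S^l(x)\ket{\mathbf{s}} = e^{-i\sum_{d=1}^D c^l_d(x)\lambda_{s_d}}\ket{\mathbf{s}} .
\end{equation*}

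\textbf{Step 2: expand the state.} Insert resolutions of the identity between consecutive layers. This gives
\begin{equation*}
\phi(x)\ket{0}^{\otimes D} = \sum_{\mathbf{s}^1,\dots,\mathbf{s}^L} e^{-i\sum_{l,d} c^l_d(x)\lambda_{s^l_d}}\, W^{L+1}\ket{\mathbf{s}^L}\prod_{l=1}^{L-1}\bra{\mathbf{s}^{l+1}}W^{l+1}\ket{\mathbf{s}^l}\,\bra{\mathbf{s}^1}W^1\ket{0}^{\otimes D}.
\end{equation*}
Collect the whole path $(\mathbf{s}^1,\dots,\mathbf{s}^L)$ into one multi-index $\mathbf{j}$. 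Let $\mathbf{c}_{\mathbf{j}}(x)$ be the vector of the $c^l_d(x)$ and $\lambda_{\mathbf{j}}$ the matching vector of eigenvalues $\lambda_{s^l_d}$. The phase then reads $e^{-i\,\mathbf{c}_{\mathbf{j}}(x)\cdot\lambda_{\mathbf{j}}}$, and the vector $\ket{w_{\mathbf{j}}}$ multiplying it depends only on the $W^l$.

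\textbf{Step 3: form the overlap.} Take the inner product of the expansions at $x$ and $x'$, the latter conjugated. This gives a double sum over $\mathbf{j},\mathbf{k}$ with phase $e^{-i(\mathbf{c}_{\mathbf{j}}(x)\lambda_{\mathbf{j}}-\mathbf{c}_{\mathbf{k}}(x')\lambda_{\mathbf{k}})}$ and coefficient $\braket{w_{\mathbf{k}}|w_{\mathbf{j}}}$. To reach the factorised form $\alpha^*_{\mathbf{k}}\alpha_{\mathbf{j}}$, fix an orthonormal basis $\{\ket{m}\}$ and set $\alpha^{(m)}_{\mathbf{j}}=\braket{m|w_{\mathbf{j}}}$. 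Then $\braket{w_{\mathbf{k}}|w_{\mathbf{j}}}=\sum_m \alpha^{(m)*}_{\mathbf{k}}\alpha^{(m)}_{\mathbf{j}}$. Equivalently, the product structure follows directly by projecting onto $\bra{0}^{\otimes D}$, as in the amplitude form of the kernel: each side becomes a scalar sum $\sum_{\mathbf{j}}\alpha_{\mathbf{j}}e^{-i\mathbf{c}_{\mathbf{j}}(x)\lambda_{\mathbf{j}}}$, with $\alpha_{\mathbf{j}}$ data independent, and multiplying two such sums gives \eqref{eq:fourier_rep_kernel} exactly.

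\textbf{Main obstacle.} The main difficulty is notational rather than conceptual. The statement's indexing $\mathbf{j}\in[1,\dots,D]^L$ and the intended form of the kernel have to be reconciled with the natural index set $\{1,2\}^{DL}$ of eigenvalue paths, and with whether the kernel is the amplitude or the squared fidelity. I would make the indexing explicit as eigenvalue choices per qubit and layer. I would then remark that for the squared fidelity the same argument applies to $|\cdot|^2$, producing a quadruple sum of the same trigonometric type.
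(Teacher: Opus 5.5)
Steps 1, 2 and the first half of Step 3 are the paper's argument, done more carefully. The paper also takes $H$ diagonal and expands $\phi(x)\ket{0}^{\otimes D}$ as a sum over paths of data-independent products of entries of the $W^l$ times phases. Your absorption of $V^{\otimes D}$ into the $W^l$ actually justifies the ``without loss of generality''. Your paths $(\mathbf{s}^1,\dots,\mathbf{s}^L)\in(\{1,2\}^D)^L$ are also the right index set: the paper's $[1,\dots,D]^L$ matches neither the $2^D$-dimensional matrix indices nor the fact that a diagonal entry of $S^l(x)$ is $e^{-i\sum_d c^l_d(x)\lambda_{s_d}}$.

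The step that fails is your ``equivalently'' route at the end of Step 3. Projecting each expansion onto $\bra{0}^{\otimes D}$ and multiplying computes $\overline{\bra{0}^{\otimes D}\phi(x')\ket{0}^{\otimes D}}\,\bra{0}^{\otimes D}\phi(x)\ket{0}^{\otimes D}$. That inserts the projector $(\ket{0}\bra{0})^{\otimes D}$ where the overlap $\bra{0}^{\otimes D}\phi(x')^{\dagger}\phi(x)\ket{0}^{\otimes D}$ has the identity. In fact no choice of scalar, data-independent $\alpha_{\mathbf{j}}$ can reproduce the kernel in general. With scalars, the right-hand side of \eqref{eq:fourier_rep_kernel} equals $\overline{g(x')}\,g(x)$ with $g(x)=\sum_{\mathbf{j}}\alpha_{\mathbf{j}}e^{-i\mathbf{c}_{\mathbf{j}}(x)\lambda_{\mathbf{j}}}$, a rank-one kernel. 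Since $k(x,x)=1$, this would force $|g|\equiv 1$ and hence $|k|\equiv 1$. Take $D=L=1$, $W^1$ the Hadamard gate, $W^2=I$, $H=\sigma_z$, $c^1_1(x)=x$. Then $k(x,x')=\cos(x-x')$, which vanishes at $x-x'=\pi/2$; your projected formula gives $\tfrac12 e^{-i(x-x')}$ instead.

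So the theorem holds only with vector-valued coefficients $\alpha_{\mathbf{j}}\in\mathbb{C}^{2^D}$, reading $\alpha^{*}_{\mathbf{k}}\alpha_{\mathbf{j}}$ as the inner product $\alpha_{\mathbf{k}}^{\dagger}\alpha_{\mathbf{j}}$. That is exactly what your first route proves: take $\alpha_{\mathbf{j}}=\ket{w_{\mathbf{j}}}$, with components $\alpha^{(m)}_{\mathbf{j}}$. It is also what the paper's proof tacitly does. Its coefficients $W^{L+1}_{j,j_L}\cdots W^{1}_{j_1,1}$ carry the output index $j$, so its $\alpha_{\mathbf{j}}$ are vectors and its last line is an inner product of two vector-valued expansions.

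You can make the structure explicit. Because $W^{L+1}$ is unitary, $\braket{w_{\mathbf{k}}|w_{\mathbf{j}}}=\delta_{\mathbf{s}^L_{\mathbf{j}},\mathbf{s}^L_{\mathbf{k}}}\,\overline{a_{\mathbf{k}}}\,a_{\mathbf{j}}$. Here $\mathbf{s}^L_{\mathbf{j}}$ is the last-layer label of path $\mathbf{j}$, and $a_{\mathbf{j}}=\prod_{l=1}^{L-1}\bra{\mathbf{s}^{l+1}}W^{l+1}\ket{\mathbf{s}^{l}}\,\bra{\mathbf{s}^1}W^1\ket{0}^{\otimes D}$. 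The coefficient matrix is therefore block-diagonal with rank up to $2^D$, not rank one. Delete the ``equivalently'' sentence and state the vector reading explicitly; your proof is then complete and coincides with the paper's.
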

\begin{proof}
    The proof closely follows \cite{Schuld_data_encoding} and \cite{petruccione_book}. For this reason, we begin by recalling the results established therein, where it was additionally assumed that $S^{l}_d(x) = e^{-ix_d H}$ is independent on the layer $l$ and acts only on the $d$-th coordinate of $x$ by means of a single qubit unitary $H$ as illustrated in Figure \ref{circ:asbtract_z_map}. An example of such feature map is the Z-feature map where $H$ is the Pauli-Z unitary $H = \sigma_z$. 

    \begin{figure}[h]
    \centering
    \begin{quantikz}
    &&\gate[2]{W^{1}}&\gate{e^{-ix_1 H}}& \gate[2]{W^{2}} &\gate{e^{-ix_1 H}} & \ldots\ &  \gate[2]{W^{L}} \\
    &&& \gate{e^{-ix_2 H}} &&  \gate{e^{-ix_2 H}} & \ldots\ &
    \end{quantikz}
    \caption{Circuit representation of feature maps considered in \cite{Schuld_data_encoding} and \cite{petruccione_book}. An example of such map is the Z-feature map.}
    \label{circ:asbtract_z_map}
    \end{figure}
    With these additional assumptions, from Theorem 6.1 of  \cite{Schuld_data_encoding} and Corollary 6.1 of \cite{petruccione_book} the kernel generated by such a feature map can be expressed as a partial Fourier sum:
    $$
    k(x,x^\prime) = \sum_{n,n^\prime} e^{-inx}e^{in^\prime x^\prime} c_{n, n^\prime}  
    $$
    where $n,n^\prime$ are vectors of eigenvalues of $H$. Of note, in the case of the Z-feature map, $H = \sigma_z$ has eigenvalues $-1$ and $+1$, therefore the possible frequencies for the RKHS generator $K(\cdot, x^\prime)$ are in the set of differences of pairs of eigenvalues of $H$: $\Omega = \{-2,0,2\}$. \\
    In our setting $S^{l}(x)$ is more general and defined by \eqref{eq:Sl}. An illustration of the corresponding circuit is given in Figure \ref{circ:general_feat_map} for $D= 2$.

    \begin{figure}[h]
    \centering
    \begin{quantikz}
    &&\gate[2]{W^{1}}&\gate{S^{1}_1(x)}& \gate[2]{W^{2}} &\gate{S^{2}_1(x)} & \ldots\ &  \gate[2]{W^{L}} \\
    &&& \gate{S^{1}_2(x)} &&  \gate{S^{2}_2(x)} & \ldots\ &
    \end{quantikz} 
    \caption{Circuit representation of an abstraction of the IQP-feature map for two qubits.}
    \label{circ:general_feat_map}
    \end{figure}

    We now want to compute the $j$-th component of $\phi(x) \ket{0}^{\otimes D} $. 
    By assuming, without loss of generality, that $H$ is diagonal and has eigenvalues $\{ \lambda _i \}_{i=1,2}$ and noting therefore that $\left[ S^{l}(x) \right]_{j,k} = \delta_{j,k}  e^{-ic^l_j(x)\lambda_{\tilde{j}}} $ for some eigenvalue $\lambda_{\tilde{j}}$ of $H$, we have 
    \begin{equation*}
        \left. \phi(x) \ket{0}^{\otimes D} \right] _j =    \sum_{j_1,...j_L=1}^DW^{L+1}_{j,j_L}W^{L}_{j_L,j_{L-1}} \dots W^{1}_{j_1,1} e^{-ic^L_{j_L}(x)\lambda_{\tilde{j_{L}}}} \dots e^{-ic^1_{j_1}(x)\lambda_{\tilde{j_1}}}.
    \end{equation*}
    By introducing the multi-index notation $\mathbf{j} \in [1, \dots , D] ^L$, we get
    \begin{equation*}
        \phi(x) \ket{0}^{\otimes D}  =    \sum_{\mathbf{j}}\alpha_{\mathbf{j}} e^{-i \mathbf{c}_{\mathbf{j}}(x)\mathbf{\lambda_{\mathbf{j}}}}
    \end{equation*}
    where $\alpha_{\mathbf{j}}$ are coefficients independent of the data $x$, $\mathbf{c}_{\mathbf{j}} = (c^l_j)_{l \in \{1, ..., L \} , j \in \mathbf{j}}$ and $\mathbf{\lambda_{\mathbf{j}}} $ is a vector whose entries are eigenvalues of $H$. 
    As a consequence the kernel $k$ generated by the feature map $\phi $ can be expressed as 
    
    \begin{equation*}
        k(x, x^{\prime}) = \bra{0}^{\otimes D} \phi (x^{\prime}) \phi (x) \ket{0} ^{\otimes D}
         = \sum_{\mathbf{j}}\sum_{\mathbf{k}} \alpha^{*}_{\mathbf{k}}\alpha_{\mathbf{j}}e^{-i \left( \mathbf{c}_{\mathbf{j}}(x)\mathbf{\lambda_{\mathbf{j}}} - \mathbf{c}_{\mathbf{k}}(x^{\prime})\mathbf{\lambda_{\mathbf{k}}} \right) }.
    \end{equation*}
\end{proof}

The following result can be derived directly from Theorem \ref{theorem:RHKS_fourier}. 

\begin{corollary} \label{corollary:IQP_RHKS_fourier}
    The RKHS generator associated with the IQP-feature map can be expressed as the composition of a non-linear function $c: \mathbb{R}^D \to (\mathbb{R}^D)^L$ where every entry of this function is a polynomial of degree $2$ (or lower) - which can also be regarded as a change of coordinates - and a Fourier like sum over the domain $(\mathbb{R}^D)^L$ where the accessible frequencies are integer multiples of $2$. 
\end{corollary}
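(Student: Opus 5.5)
The plan is to specialise Theorem \ref{theorem:RHKS_fourier} to the IQP circuit. First, I will rewrite the IQP unitary so that it has the layered form \eqref{eq:IQP_layers_decomposition} with embedding layers of the form \eqref{eq:Sl}. Each repetition of the IQP module starts with a Hadamard layer, which I absorb into the data-independent unitary $W^l$.

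The embedding part is built from single-qubit rotations $R_{Z_k}(x) = e^{-i c_k(x) Z_k/2}$ with $c_k(x)$ proportional to $b\pi x^k$. It also contains two-qubit rotations $R_{Z_iZ_j}(x) = e^{-i c_{ij}(x) Z_i Z_j/2}$, where in the standard Havl\'i\v{c}ek construction $c_{ij}(x) = (\pi - x^i)(\pi - x^j)$, suitably rescaled by the bandwidth. This is a polynomial of degree $2$ in $x$.

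The key step is to put the $ZZ$ interaction into the single-qubit form \eqref{eq:Sl}. I will conjugate by CNOTs: $e^{-i\theta Z_iZ_j/2} = \mathrm{CNOT}_{ij}\,(\mathbb{1}\otimes e^{-i\theta Z_j/2})\,\mathrm{CNOT}_{ij}$. The CNOTs are data independent and can be merged into neighbouring $W$ blocks. After this, every data-dependent gate is a single-qubit $Z$-rotation $e^{-i c(x) H}$ with $H = \sigma_z/2$ (or $\sigma_z$, depending on the convention) and $c$ a polynomial of degree at most $2$. Since all these gates commute, I can group them into layers $S^l(x)=\bigotimes_d e^{-ic^l_d(x)H}$. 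Any qubit that is idle in a given layer is padded with $c^l_d\equiv 0$.

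This defines the map $c:\mathbb{R}^D\to(\mathbb{R}^D)^L$, $x\mapsto (c^l_d(x))_{l,d}$. Each of its entries is a polynomial of degree $\le 2$: the constant zero, a linear term $x^d$, or a quadratic term from a $ZZ$ gate.

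Applying \eqref{eq:fourier_rep_kernel}, the kernel becomes $k(x,x')=\sum_{\mathbf{j},\mathbf{k}}\alpha^*_{\mathbf{k}}\alpha_{\mathbf{j}}e^{-i(\mathbf{c_j}(x)\lambda_{\mathbf{j}}-\mathbf{c_k}(x')\lambda_{\mathbf{k}})}$. Fixing $x'$, the generator $k(\cdot,x')$ is therefore a finite trigonometric sum in the variables $y=c(x)$. For the fidelity kernel $|\langle\phi(x')|\phi(x)\rangle|^2$, I also pair each term with its complex conjugate. In both cases the frequencies attached to each coordinate $y^l_d$ are differences of eigenvalues of $H$, which lie in $\{\lambda_1-\lambda_2,0,\lambda_2-\lambda_1\}$.

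With the paper's scaling, in which the rotation angle is $2\pi b x$, the eigenvalues are $\pm1$. The frequencies are then $\{-2,0,2\}$, and in the squared kernel their sums lie in $2\mathbb{Z}$. This gives "integer multiples of $2$", up to the overall factor of the bandwidth $b$ absorbed into $c$.

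I expect the main obstacle to be bookkeeping rather than depth. I have to fix consistent normalisations: factors of $1/2$ in the rotations, the bandwidth, and the choice of $H$. With these fixed, the eigenvalue differences must come out as exactly $\pm2$.

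I must also check that the CNOT decomposition really yields single-qubit layers of the tensor form \eqref{eq:Sl}. A qubit can be involved in several $ZZ$ gates. I will handle this by allotting one layer per gate, which may increase $L$ but preserves the structure.

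A further subtlety is that Theorem \ref{theorem:RHKS_fourier} is stated for the amplitude overlap, while the IQP kernel is its modulus squared. I will address this by noting that the squared kernel is a product of two such finite sums, which is again a finite sum of the same type with frequencies in $2\mathbb{Z}$.
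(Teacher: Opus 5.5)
Your proposal is correct and follows the paper's route. You specialise Theorem~\ref{theorem:RHKS_fourier} with $H=\sigma_z$, use linear angles in the first embedding layer, put the quadratic angles $(\pi-x_d)(\pi-x_{d'})$ (or $0$) on later layers sandwiched between data-independent entangling blocks (the paper's circuit figure uses this CNOT-type sandwich implicitly), and obtain the frequencies as sums of eigenvalue differences in $\{-2,0,2\}$. Your explicit bookkeeping (one layer per $ZZ$ gate, and the modulus squared) is more careful than the paper's proof, which leaves both implicit.

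One small correction: for the unsquared overlap, each coordinate $y^l_d$ carries an eigenvalue $\pm1$ itself, not a difference of eigenvalues. The frequencies in $\{-2,0,2\}\subset 2\mathbb{Z}$ arise only when you pair with the complex conjugate in the fidelity kernel, not ``in both cases''. This does not affect your conclusion, since the fidelity kernel is the IQP kernel.
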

\begin{proof}
    This result directly derives from Theorem \ref{theorem:RHKS_fourier} by noting that the IQP-feature map can be written as in \eqref{eq:IQP_layers_decomposition} and \eqref{eq:Sl} where $H = \sigma_z$ and 
    $c^1(x)= x$, i.e., $c^1_d(x)= x_d$, and either $c^l_d(x)= (\pi - x_d)(\pi - x_{d^\prime})$ for some $d^\prime$ or $c^l_d(x)=0$ for $l>1$. Here we interpret $e^{-i 0 H}$ as the identity matrix. Since $H = \sigma_z$ has eigenvalues $-1$ and $+1$, each encoding gate contributes elementary frequency differences in $\{-2,0,2\}$. The total phase accumulated by the feature map is obtained by combining the contributions from all encoding gates, so the accessible frequencies are finite sums of elements of $\{-2,0,2\}$, that is, integer multiples of $2$. A graphical representation of this feature map for $2$ qubits is given in Figure \ref{circ:IQP_feat_map}.
    \begin{figure}[h]
    \centering
    \begin{quantikz}
    &&\gate[2]{W^{1}}&\gate{e^{-ix_1 H}}& \gate[2]{W^{2}} &\gate{e^{-i 0 H} = Id} &   \gate[2]{W^{2}} \\
    &&& \gate{e^{-ix_2 H}} &&  \gate{e^{-i(\pi -x_1)(\pi - x_2) H}} & 
    \end{quantikz}
    \caption{Representation of the IQP-feature map circuit for 2 qubits and and 1 repetition.}
    \label{circ:IQP_feat_map}
    \end{figure}
\end{proof}

As regards the dependence of the RKHS on the bandwidth hyperparameter, for $D=2$, the RKHS generators can be represented graphically as in Figure \ref{fig:RKHS_quantum}, where the bandwidth varies between 0.125 and 4.00. 
The trigonometric-like structure of those kernels derives from Corollary \ref{corollary:IQP_RHKS_fourier} and recalling that the bandwidth $b$ acts as a scaling factor on the input data. The RKHS generators are finite sums of periodic functions composed with non-linear polynomial coordinate transformations. Therefore, they are generally not periodic in the original input coordinates, but inherit oscillatory structures whose characteristic spatial frequencies are controlled by $b$.
Building on this analytical characterisation, we identified three qualitatively different regimes as $b$ varies. For large values of $b$, the periodic nature of the components of the RKHS generator is clearly evident in Figure \ref{fig:RKHS_quantum}. 
Intuitively, this suggests that QK models better capture periodic patterns in data whose intrinsic periodicity is commensurate with the period implied by $b$. While, for small values of $b$, the kernel RKHS can capture only local interactions between data, similarly to the RBF kernel, since the input data $x$ are confined to a bounded domain. In this regime, QKs closely resemble classical RBF kernels consistently with the observations in \cite{Alban_2025}. Intermediate values of $b$ give rise to a distinct behaviour, which we describe as multimodal: the full underlying periodic structure is not yet visible, yet QKs can represent points as similar when they are distant in the original feature space thus capturing longer-range interactions. \\ 
We now examine the other hyperparameters: the number of repetitions of the embedding circuit and the entanglement strategy. 
The number of repetitions influences the accessible frequencies: higher $r$ implies multiple contributions for each term $c^l_d(x)$ in Equation \eqref{eq:fourier_rep_kernel}. The individual coefficients of each term $c^l_d(x)$ (given by differences of eigenvalues of $\sigma_z$) sum up yielding potentially integer multiples of the base frequency $2b$ (see also Figure \ref{fig:RKHS_quantum} and Figure \ref{fig:RKHS_Fourier}). 
The effect of the entanglement strategy is less immediate as it affects only the $W^l$  terms in Theorem \ref{theorem:RHKS_fourier}. As a direct consequence, changing the entanglement strategy does not modify the set of accessible frequencies in the Fourier-like representation of the RKHS generators, but only the corresponding expansion coefficients. It therefore regulates the relative weight of the different terms and frequencies, rather than introducing new frequencies. In this sense, different entanglement patterns modify the relative importance of the components already available in the RKHS representation.
\\
\begin{figure}[!h]
    \centering
    \includegraphics[width=0.9 \textwidth]{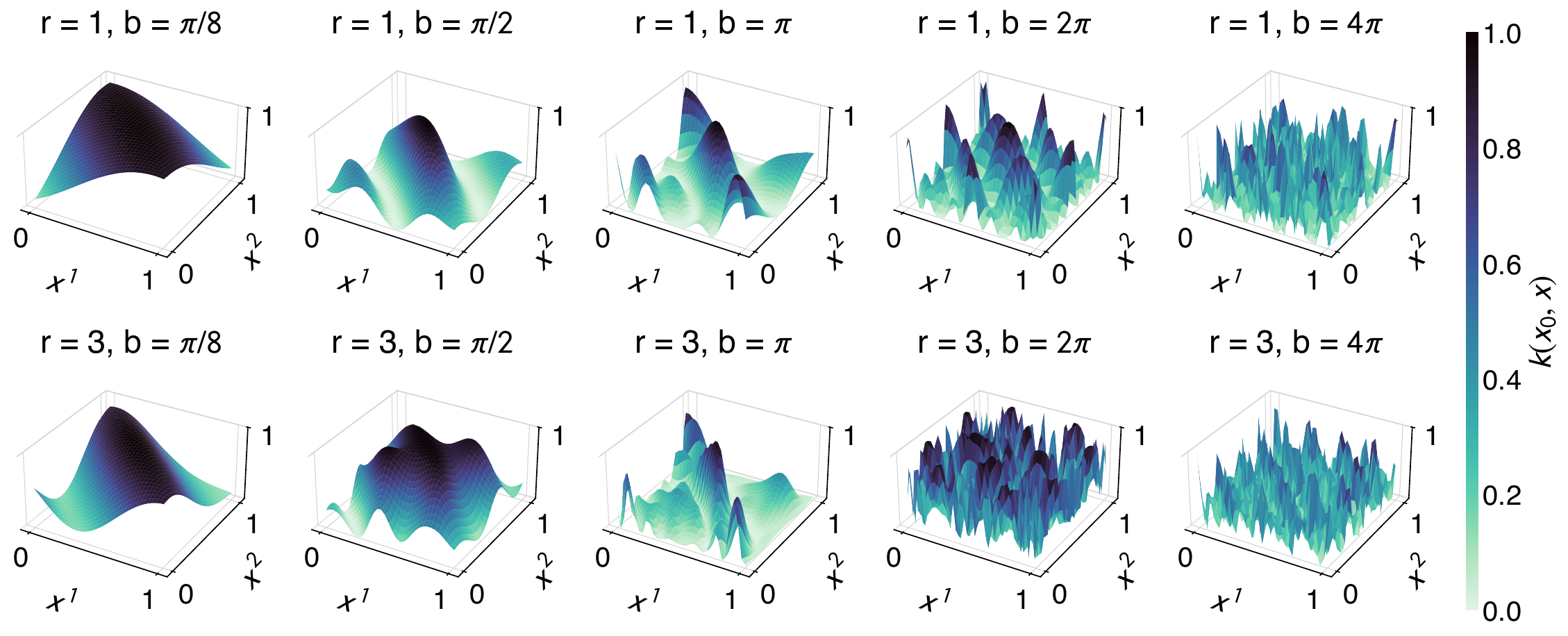}
    \caption{Quantum RKHS generators are represented for different values of bandwidth and number of repetitions. Each plot displays the values of the kernel $k(x, c)$ where $c$ is fixed at the centre of the square and $x = (x^1, x^2)$ where $x^1$ and $x^2$ are Cartesian coordinates in the horizontal plane. The periodic-like nature of the kernel becomes visible for larger values of $b$. It can also be observed that for intermediate values of $b$ the kernel is multi-modal which suggests that it can be used to capture multi-modality or long range interactions in the data. Increasing the number of repetitions increases the complexity of the RKHS generator.}
    \label{fig:RKHS_quantum}
\end{figure}

An alternative approach to analyse the pseudo-periodic structure described above is by looking at the Fourier transform of the RKHS generator. Given $x\in [0,1]^D$, the RKHS generator can be interpreted as an integrable function $k(x, \cdot) : [0,1]^D\to \mathbb{R}$. It can be therefore represented as a Fourier series with coefficients $c_{u}$ for integer frequencies $u \in \mathbb{Z}^D$. If the magnitude of $c_{u}$ is large it means that the RKHS generator has strong oscillations in the frequencies represented by $u$. Figure \ref{fig:RKHS_Fourier} represents the Fourier transform of the $2$D RKHS. Each pixel $(u^1,u^2)$ in the image represents the magnitude of the Fourier coefficient $c_{u^1,u^2}$ corresponding to the frequencies $(u^1,u^2)$. Note that, to reduce the large magnitude of the point $(0,0)$ in the Fourier space, the RKHS has been normalised to have mean value $0$ before applying the Fourier transform. It can be observed that the coefficients are large when $u^1$ and $u^2$ are proportional to the bandwidth $b$. This confirms that the RKHS generator has strong periodic components at frequencies proportional to the bandwidth $b$. Intuitively, this suggests that the QK models might be able to better capture periodic patterns in the data if the period of the data is aligned with the period implied by $b$.
\begin{figure}[!h]
    \centering
    \includegraphics[width=0.9 \textwidth]{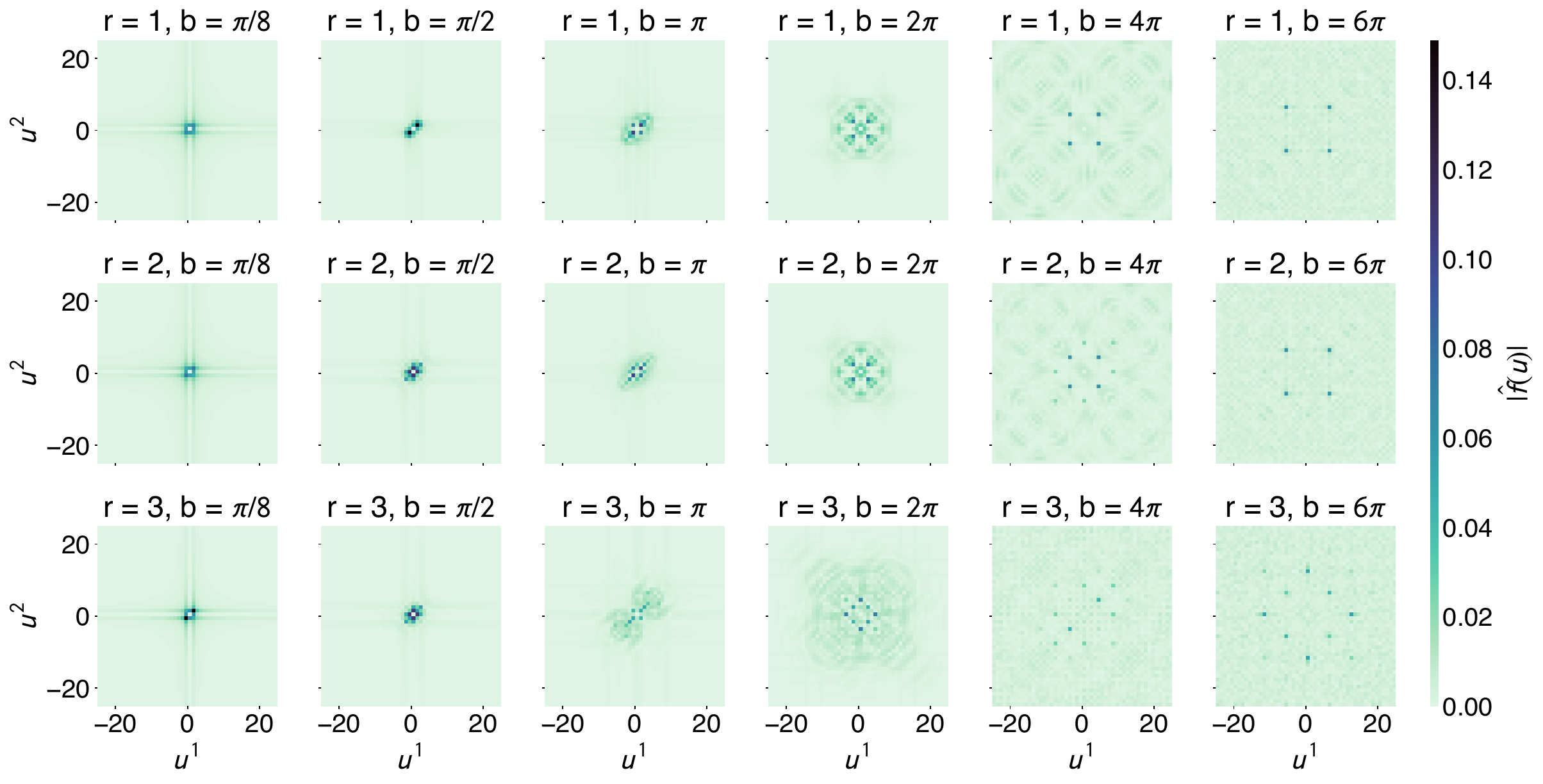}
    \caption{Heatmap representations of the Fourier transform of the $2$D RKHS (see Figure \ref{fig:RKHS_quantum}) for different values of the bandwidth and reps parameter with linear entanglement.  }
    \label{fig:RKHS_Fourier}
\end{figure}

 \section{Numerical experiments} \label{sec:num_exp}
In this section, we report two sets of numerical experiments designed to probe the expressivity of IQP-feature map QK on distinct bandwidth regimes identified in Section \ref{sec:RHKS_fourier}. The first experiment considers data with an underlying periodic structure (Section \ref{sec:periodic}) and investigates the high bandwidth regime. The second considers multimodal data (Section \ref{sec:multimodal}) and investigates the corresponding intermediate bandwidth regime. In both cases, we compare QK models against a classical Gaussian kernel baseline on supervised classification and unsupervised anomaly detection tasks using SVMs \cite{SVM} and One-Class SVMs (OCSVM) \cite{Schoelkopf_OCSVM, Liang2019QuantumAD_OCSVM} respectively. 
\subsection{Periodic data}  \label{sec:periodic}
To assess the ability of QK methods to learn data with a periodic structure, we generated the dataset by uniformly sampling random data in the unit hypercube $[0,1]^D$ for different values of $D \in \{2,4,6\}$. Data were labelled into two classes based on the sign of an oscillating function defined as follows: for all $x = (x^1, ..., x^D) \in [0,1]^D$ the label $y(x) \in \{0,1\}$ is computed as 
\begin{equation} \label{eq:classification_label}
   y(x) = \frac{1}{2} \left[ sign \left( \sum_{j=1}^D \sin \left(2 \pi\omega  x^j\right)\right) +1 \right]
\end{equation}
where $\omega \in \mathbb{N}$ represents the number of oscillations of the function in the domain, i.e., the frequency. In this experiment, we considered $\omega \in \{1,2,4,6,8\}$. The two classes used for the classification task were generated by Equation \eqref{eq:classification_label}, yielding an approximately balanced dataset. For the anomaly detection task, the majority class was sampled from the data labelled 0 and a number of points equal to $20\%$  of the majority class  was uniformly generated in the unit hypercube $[0,1]^D$ and added as the minority class. The periodic nature of the dataset can be analytically derived and is shown in Figure \ref{fig:datasets}(a,b).
\begin{figure}
    \centering
    \includegraphics[width=0.9 \textwidth]{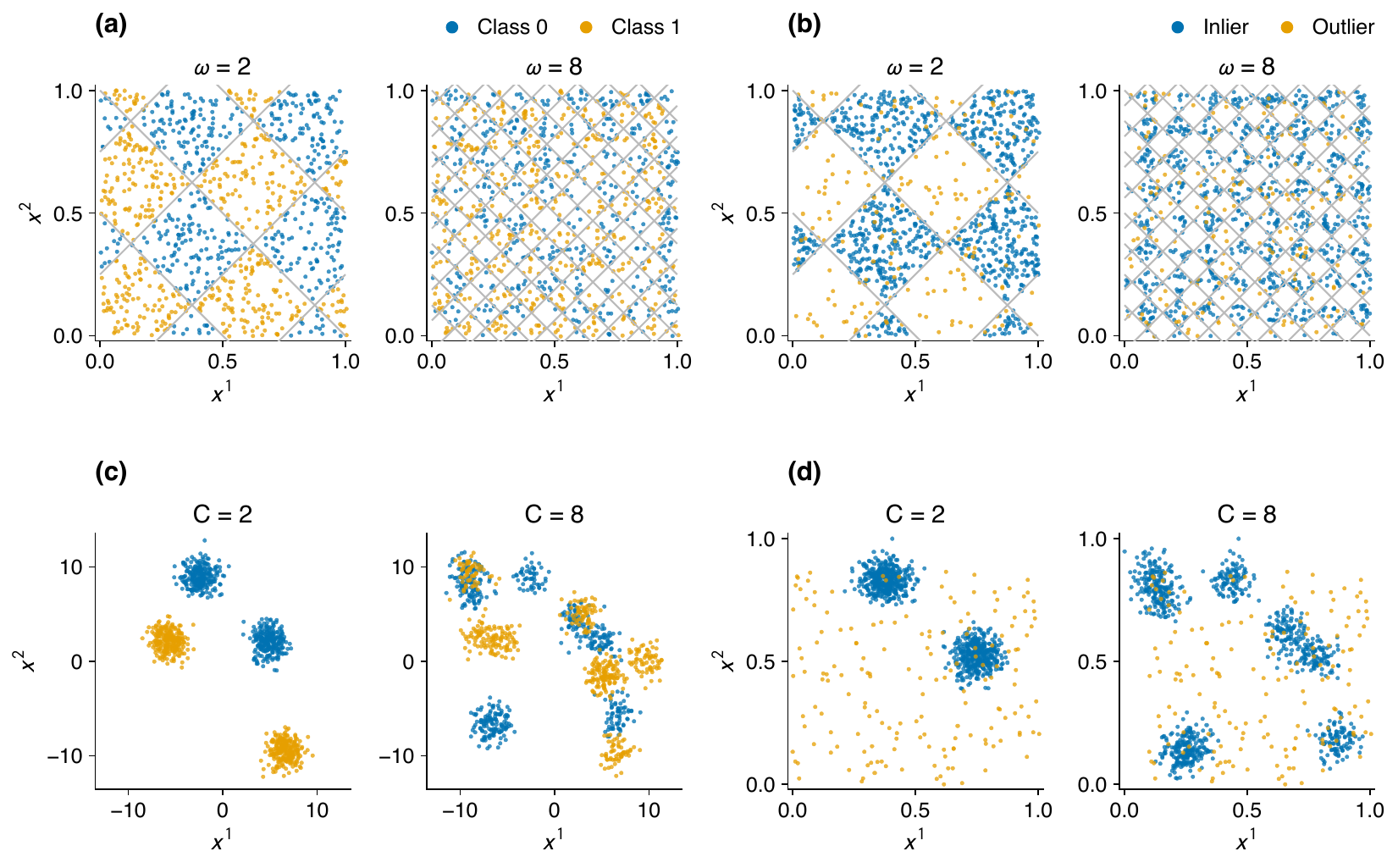}
    \caption{$2$D generated datasets. (a) Periodic datasets used for the classification task for $2$ and $8$ oscillations $\omega$. The grey lines indicate the class separation boundary. (b) Periodic datasets used for the anomaly detection task for $2$ and $8$ oscillations. The grey lines delineate the support of the inlier distribution. (c) Multimodal datasets used for the classification task with $2$ and $8$ centres. (d) Multimodal datasets used for the anomaly detection task with $2$ and $8$ centres. Each dataset contains 2000 points.}
    \label{fig:datasets}
\end{figure}

To benchmark different QK configurations, for each dataset and task, we computed QKs for different values of the bandwidth $b \in \{1,2,4,6,8\}$, of the number of repetitions of the data encoding feature map  $r \in \{1,2,3\}$, and for two different entanglement strategies, namely \textit{linear} and \textit{full}. In addition, we computed a classical Gaussian kernel using the scikit-learn implementation with default parameter $\gamma= 1/ (D \times  Var(X))$ \cite{scikit-learn} and additional values $\gamma \in \{0.01,0.1,1,10,100\}$. For each resulting kernel, an SVM was trained and evaluated on the classification task, and an OCSVM on the anomaly detection task. The F1-score was selected as main evaluation metric for both task as it is well suited for both balanced and imbalanced datasets. 

Figure \ref{fig:combined_periodic_classification}(a,b) reports the test-set F1-score over the full $(\omega,b)$ grid. For the best performing configuration (with 2 qubits, 3 repetitions, linear entanglement) the highest F1-score is concentrated on the diagonal $b=\omega$, i.e. when the bandwidth is aligned with the periodicity of the underlying data, and varies between 0.75 and 0.84, while it falls in 0.46-0.72 when $b \neq \omega$. The same trend is evident after averaging over all QK configurations (panel b), although with a general reduced magnitude (0.49-0.80), suggesting that it's not an artefact of a single favourable hyperparameters choice. 
When we consider the matched $b=\omega$ (Figure \ref{fig:combined_periodic_classification}(c)) adding the classical Gaussian kernel for comparison across different $D \in \{2,4,6\}$, we notice that, beyond the trivial case $\omega=1$, the average QK's F1-score is approximately stable in $\omega$ for all $D$: it ranges from $0.657$ to $0.691$ for $D=2$, from $0.548$ to $0.583$ for $D=4$, and from $0.536$ to $0.553$ for $D=6$. The overall F1-score decreases with $D$, reflecting the increased difficulty of modelling higher-dimensional periodic structures. The classical Gaussian kernel behaves qualitatively differently: at low frequency ($\omega \in \{1,2\}$) it better captures the structure of the data, but its F1-score decreases monotonically with $\omega$, crossing the QK mean at $\omega \geq 4$ for $D \in \{2,4\}$.
Also for the anomaly detection task, the QKs models perform best when the bandwidth matches the data periodicity $b=\omega$. Figure \ref{fig:combined_periodic_AD}(a) shows the best configuration (with $2$ qubits, $3$ repetitions, \textit{full} entanglement), for each $\omega$ the maximum F1-score is reached on the diagonal $b=\omega$. The average F1-scores over all QK configurations (Figure \ref{fig:combined_periodic_AD}(b)) show a similar pattern for the datasets with more oscillations. For $\omega \geq 4$, the maximum F1-score is also obtained on the diagonal $b=\omega$. Low oscillations cases $\omega \leq 2$ display off-diagonal maxima. 
Restricting to the case $b=\omega$ across different dimensions $D \in \{2,4,6\}$ (Figure \ref{fig:combined_periodic_AD}(c))  the average test F1-score shows a similar pattern to the one observed for the classification task, and is stable with increasing oscillations and even increases slightly for $D\geq 4$, whereas that of the Gaussian kernels decreases with increasing $\omega$. The min-max range bars for classical and quantum methods overlap at $D=2$ and at low $\omega$ and become increasingly distinct for higher $D$ and $\omega$. 
Of note, the overall F1-score is generally low that is expected since no labels are used in training and the supports of the normal and the outlier data distributions partially overlap by construction.

For both the classification and anomaly detection tasks, these observations are consistent with the theoretical analysis of Section \ref{sec:RHKS_fourier}. As a direct consequence of Corollary \ref{corollary:IQP_RHKS_fourier} and of the definition of the bandwidth $b$, the RKHS generators contain oscillatory components whose characteristic frequencies are proportional to $b$ (see also Figure \ref{fig:RKHS_Fourier}). This oscillatory structure allows the kernel to assign relatively high similarity to points separated along the input coordinates by distances of the order of integer multiples of $1/b$, although the non-linear terms in the encoding prevent exact periodicity in general. The periodicity of the data labels is instead determined by $1/\omega$ by construction (Eq. \eqref{eq:classification_label}), so that points differing by integer multiples of $1/\omega$ along each coordinate belong to the same class. When $\omega=b$ the characteristic scales of the oscillatory structures of the kernel and the target function are therefore aligned allowing the corresponding QK methods to better capture the underlying data structure. Conversely, when $b\neq\omega$, this alignment is generally reduced, consistently with the lower F1-scores observed in Figure \ref{fig:combined_periodic_classification}(a,b) and Figure \ref{fig:combined_periodic_AD}(a,b).

The difference in the behaviour of quantum and classical kernels can also be interpreted in light of the RKHS properties discussed in Section \ref{sec:background} and Section \ref{sec:RHKS_fourier}. As $\omega$ increases, the target label function \eqref{eq:classification_label} becomes increasingly oscillatory, making the learning task more challenging. The Gaussian kernel, whose generators are localized around the training points and do not exhibit an intrinsic periodic structure, is expected to require increasingly fine spatial resolution to capture these oscillations. In contrast, the IQP kernel contains oscillatory components whose characteristic frequencies can be adjusted through the bandwidth $b$. When $b=\omega$, these components remain aligned with the periodic structure of the target function as its frequency increases. This provides a possible explanation for the observed stability of the QK performance with respect to $\omega$, in contrast to the progressive deterioration of the Gaussian kernel.

\begin{figure*}
 \centering
    \includegraphics[width=1\linewidth]{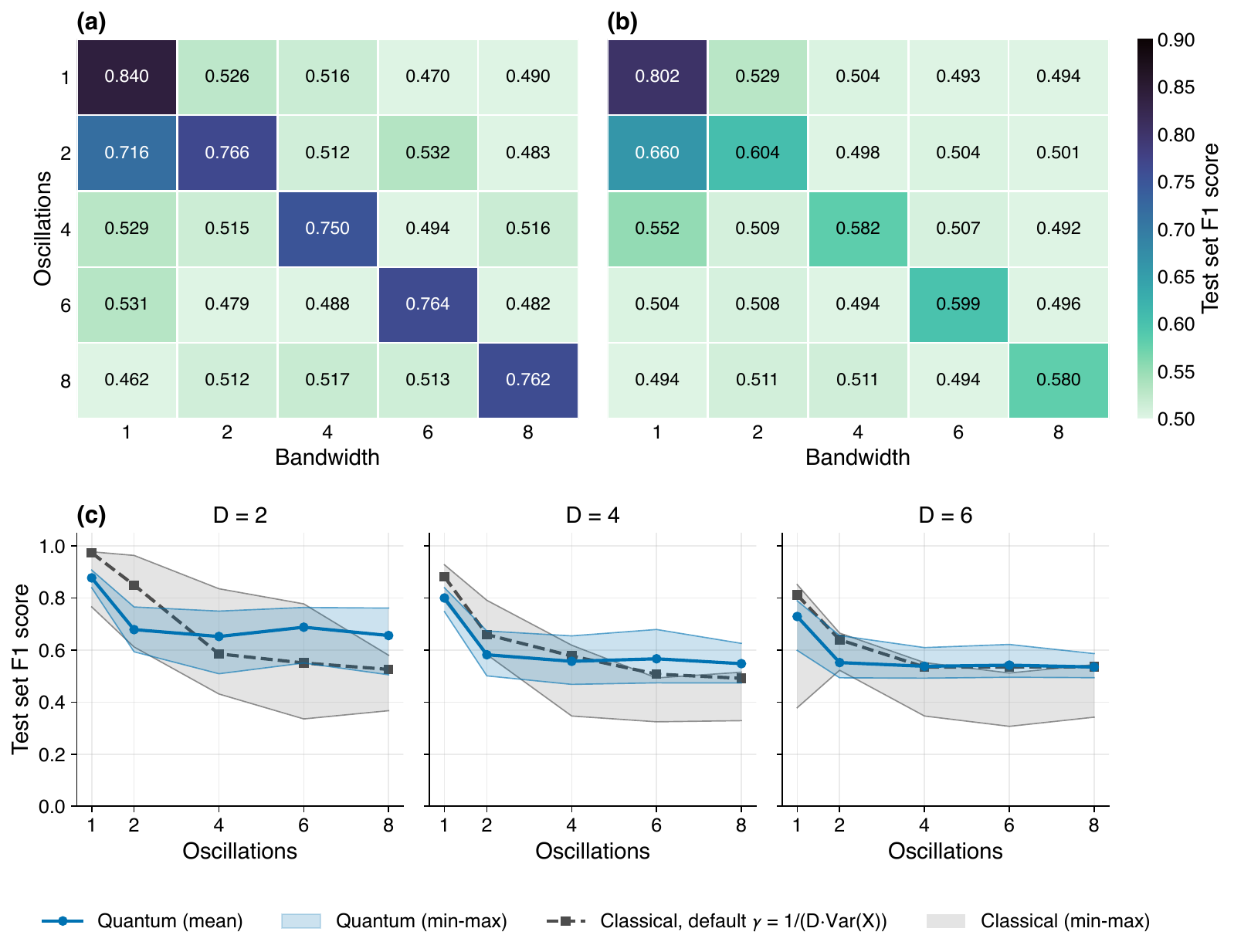}
    \caption{Test-set F1-score for the periodic classification task. (a) F1-score of a single representative QK configuration (2 qubits, 3 repetitions, \textit{linear} entanglement) as a function of the oscillation frequency $\omega$ and the bandwidth $b$. (b) F1-score averaged over all considered QK configurations (number of qubits, repetitions and entanglement strategies) on the same $(\omega,b)$ grid. In both panels the F1-score peaks along the diagonal $b=\omega$ and drops off-diagonal, a pattern more pronounced in (a) than in (b). (c) Average F1-score of QKs with $b=\omega$ (blue line) and of the classical Gaussian kernel with default $\gamma$ (dashed grey line) as a function of $\omega$, for $D \in \{2,4,6\}$. The blue area indicates the minimum and maximum F1-score of QKs across repetitions and entanglement strategies, while the grey area indicates the minimum and maximum F1-score of the classical kernel across $\gamma \in \{0.01,0.1,1,10,100\}$.}
    \label{fig:combined_periodic_classification}
\end{figure*}

 \begin{figure*}
     \centering
     \includegraphics[width=1\linewidth]{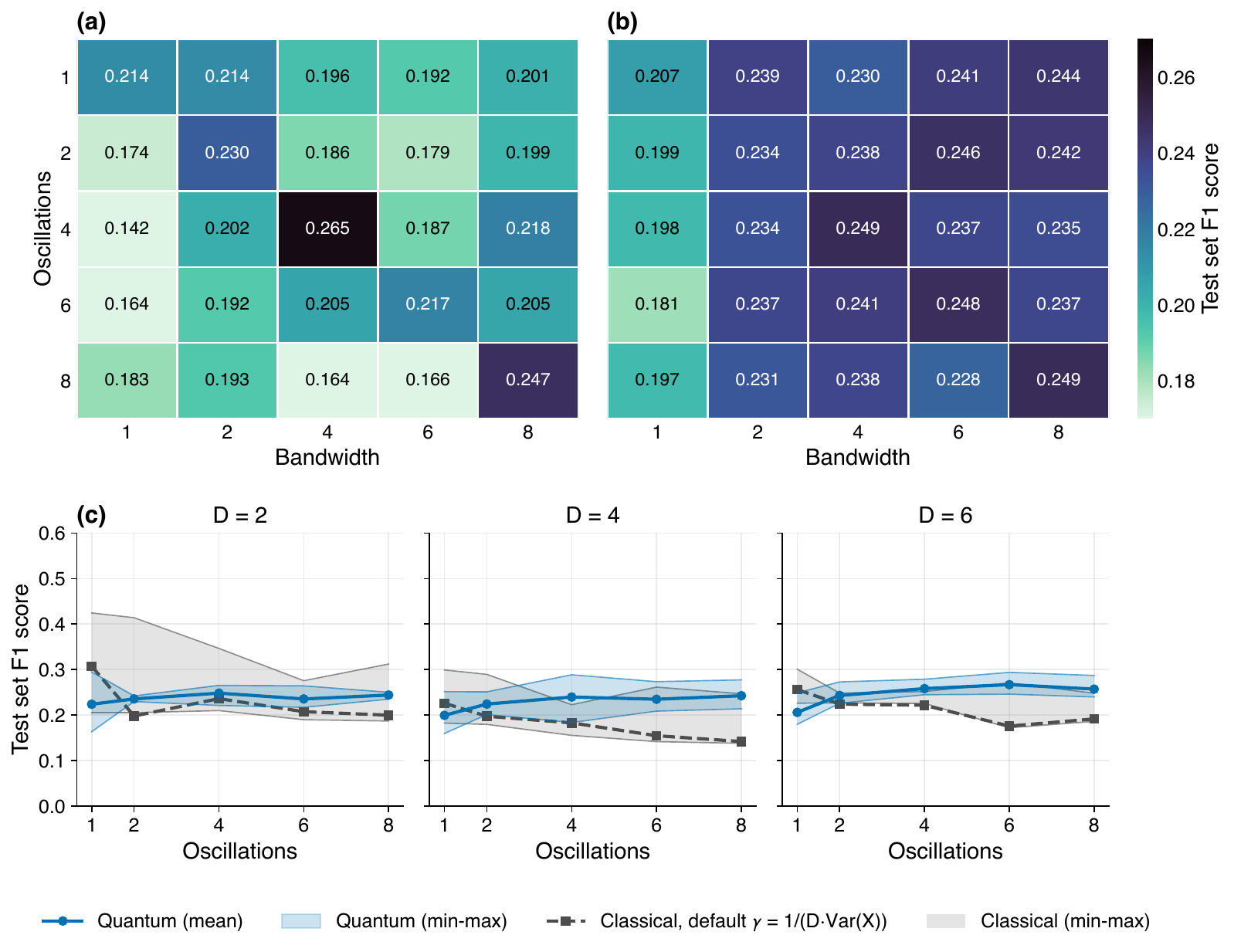}
 \caption{Test-set F1-score for the periodic anomaly detection task. (a) F1-score of a single representative QK configuration (2 qubits, 3 repetitions, \textit{full} entanglement) as a function of the oscillation frequency $\omega$ and the bandwidth $b$. (b) F1-score averaged over all considered QK configurations (number of qubits, repetitions and entanglement strategies) on the same $(\omega,b)$ grid. In (b) the lowest scores are obtained for $b=1$, while for $\omega \geq 4$ the highest score in each row is reached along the diagonal $b=\omega$. (c) Average F1-score of QKs with $b=\omega$ (blue line) and of the classical Gaussian kernel with default $\gamma$ (dashed grey line) as a function of $\omega$, for $D \in \{2,4,6\}$. The blue area indicates the minimum and maximum F1-score of QKs across repetitions and entanglement strategies, while the grey area indicates the minimum and maximum F1-score of the classical kernel across $\gamma \in \{0.01,0.1,1,10,100\}$.}
     \label{fig:combined_periodic_AD}
 \end{figure*}

 \subsection{Multimodal data} \label{sec:multimodal}
In this section, we assess the ability of QKs to capture multimodal data on the same two ML tasks tested in the previous section. To this end, we generated datasets according to a multimodal distribution in the domain $[0,1]^D$ for different values of $D \in \{2,4,6\}$. Multimodality was obtained by sampling data from a distribution which is the sum of independent Gaussian distributions centred at different points in $[0,1]^D$ and with fixed variance. Visual representations of the datasets are reported in Figure \ref{fig:datasets}(c,d). The density function of the distribution from which data were sampled is given by:
\begin{equation}
  f(x) = \frac{1}{C} \sum_{i=1}^C  f_{c_i,\sigma}(x)  
  \label{eq:density_fucntion}
\end{equation} 
where $f_{c_i,\sigma}$ is the density of a Gaussian function centred at $c_i$ with standard deviation $\sigma$.
As a Gaussian mixture, this dataset does not contain high-frequency structure, making it representative of the small- and intermediate-bandwidth regimes discussed in Section \ref{sec:RHKS_fourier}. The degree of multimodality was controlled by the number of centres $C \in \{2,4,8\}$, and the ground truth labels were assigned to each of the datasets depending on the ML task considered. For the binary classification task, two sets of data were generated by \eqref{eq:density_fucntion} with different, randomly selected, centres. Each set of data was assigned to a binary class. In this way, two multimodal classes were obtained, each with $C$ modes and a balanced dataset. For anomaly detection, data were generated by using \eqref{eq:density_fucntion} as normal data and adding $20\%$ of data uniformly generated in $[0,1]^D$ as anomalies. 

In this experiment we explored QK configurations with bandwidth values $b \in \{1/4, 1/2, 1, 3/2, 2\}$. We used the same number of repetitions and entanglement types tested for the periodic data scenario (i.e., $r \in \{1,2,3\}$, and entanglement strategies \textit{linear} and \textit{full}) and computed the classical RBF kernels with the same settings employed for the periodic dataset. We then fitted and evaluated kernel-based ML models using respectively SVMs and OCSVMs for binary classification and anomaly detection. 

Figure \ref{fig:multimodal_results} shows the test-set F1-score obtained by QKs as a function of the bandwidth for different dimensions and degrees of multimodality. The figure also reports the F1-score for  classical kernel methods for a comparison. 

In classification (Figure \ref{fig:multimodal_results}(a)) QKs achieve near perfect performance (F1-score close to 1) for $D \geq 4$ or for $C=2$, for intermediate values of the bandwidth $b = 1$. In the same setting, the performance deteriorates for larger and smaller values of the bandwidth. In the case of $D=2$ and $C \geq 4$ the F1-scores are generally lower but still peak at intermediate bandwidth values $b \in \{1,1.5\}$. In all cases, the plot describing the dependency of the F1-scores on the bandwidth describes an arched curve peaking at intermediate bandwidth values. Classical kernels with the default $\gamma$ perform similarly to the best quantum kernels and the quantum and classical performance min-max ranges largely overlap. 

In anomaly detection (Figure \ref{fig:multimodal_results}(b)), the effect of the bandwidth is more pronounced, with the mean  F1-score showing a clear peak at intermediate bandwidths for all degrees of multimodality $C$. The maximum mean F1-score is reached at $b=1$  in all settings ($0.64$-$0.80$), except for $D=2$ and $C>2$ where it shifts to $b=1.5$ ($0.43$-$0.49$). Additionally, a higher degree of multimodality makes anomalies harder to detect, as for a given $D$ the maximum mean F1-score decreases with $C$, from $0.64$ to $0.43$ for $D=2$, from $0.79$ to $0.65$ for $D=4$ and from $0.80$ to $0.69$ for D=6.
Gaussian kernels with default $\gamma$ reach a F1-score comparable to the one of QKs with intermediate $b$ values. Additionally, quantum and classical performance ranges largely overlap.

These results are consistent with the qualitative behaviour of the RKHS generators discussed in Section \ref{sec:RHKS_fourier}. At small bandwidth values, the IQP kernel exhibits a localized structure similar to that of the Gaussian kernel, primarily capturing relationships between nearby points. At intermediate bandwidths $b \in \{1,1.5\}$, the RKHS generators develop a multimodal structure, allowing the kernel to assign relatively high similarity also to points separated in the original input space. This property can be beneficial for learning multimodal distributions, where points belonging to the same class or to the normal data distribution may be concentrated in distinct, spatially separated regions. At larger bandwidths, the increasingly oscillatory structure of the kernel introduces additional variations in similarity that are not aligned with the underlying data distribution. This provides an explanation for the observed peak in performance at intermediate bandwidths. 
The stronger dependence on the bandwidth observed in anomaly detection may be explained by two factors. First, anomaly detection with OCSVM does not use class labels in training. It therefore relies on the similarity structure induced by the kernel to learn the distribution of normal data. Since normal data were chosen by construction to be multimodal (see Eq. \eqref{eq:density_fucntion}), intermediate values of $b$, which give rise to multimodal RKHS generators, may be particularly suitable for capturing the structure of the normal-data distribution. 
Second, the classification task considered here appears to be relatively easy, with high F1-scores already achieved at small bandwidth values. This suggests that the local similarity structure available in this regime is already sufficient to separate the two classes, leaving limited room for improvement at intermediate bandwidths. 

\begin{figure}[!h]
    \centering
    \includegraphics[width=1\textwidth]{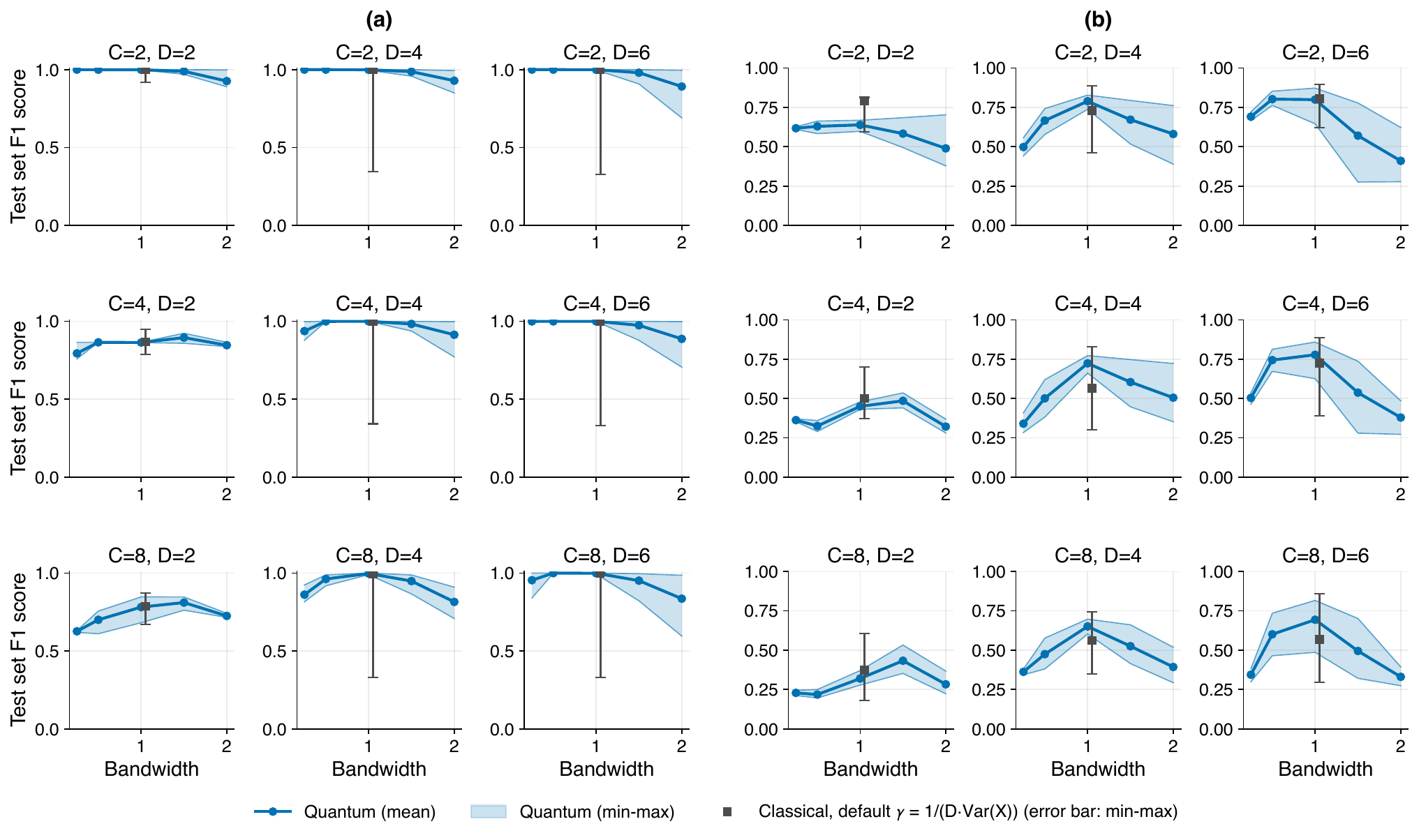}
    \caption{F1-scores of classical and QK methods on datasets with varying degree of multimodality, for the supervised classification task (a) and the anomaly detection task (b). Each plot corresponds to a dataset generated with a different number of centres per class $C$ (degree of multimodality) and dimensionality $D$ (equal to the number of qubits); the horizontal axis displays the bandwidth value $b$. Since the classical RBF kernel does not depend on $b$, its performance is shown as a single grey point corresponding to the default $\gamma$ setting (plotted at the mean bandwidth for reference), with a bar spanning the min-max performance across $\gamma \in \{0.01,0.1,1,10,100\}$.}
    \label{fig:multimodal_results}
\end{figure}

We also observe that the other two hyperparameters, which are the number of repetitions $r$ and the entanglement strategy, have a non-negligible  but less straightforward impact on QK performance. The performance obtained for each value of $r$ and each entanglement strategy is reported separately in Supplementary Figures S1--S4. As discussed in Section \ref{sec:RHKS_fourier}, $r$ determines the accessible multiples of the base frequency and the entanglement influences the relative weights of the terms in the RKHS Fourier expansion. Neither parameter directly controls the base frequency, which is determined by the bandwidth $b$. 
The datasets considered in these experiments were specifically designed to investigate periodicity and multimodality, two structural characteristics of the RKHS generators that are primarily controlled by $b$ (see also Figures \ref{fig:RKHS_quantum} and \ref{fig:RKHS_Fourier}). The effects of $r$ and entanglement are therefore less directly related to the properties of the datasets and more difficult to characterise systematically. Nevertheless, the variability across QK configurations shown in Figures \ref{fig:combined_periodic_classification}(c), \ref{fig:combined_periodic_AD}(c), and \ref{fig:multimodal_results} indicate that their influence on performance is non-negligible, even for the optimal bandwidth selection. This is particularly evident for the multimodal anomaly detection task (Figure \ref{fig:multimodal_results}(b)) and for periodic classification with $D=2$ (Figure \ref{fig:combined_periodic_classification}(c)). These observations suggest that, although the bandwidth can be selected based on the structural characteristics of the data, the remaining hyperparameters should also be optimised using standard hyperparameter tuning procedures, as commonly done in classical machine learning methods.
\section{Conclusions} 
In this article, we presented a theoretical characterisation of the expressivity of quantum kernel methods based on the IQP feature map. 
By extending  the framework of \cite{Schuld_data_encoding}, we proved that the generators of the RKHS associated with IQP kernels can be expressed as a composition of periodic trigonometric functions acting on polynomially transformed input variables, with a finite set of accessible frequencies determined by the feature-map construction and scaled by the bandwidth hyperparameter.
This representation implies that the qualitative behaviour of the kernel shifts with the bandwidth, progressing from Gaussian-like at small values, to multimodal at intermediate values, to periodic at large values. We validated these findings through numerical experiments on two families of synthetic datasets, each designed to probe these different regimes: one with periodic structure and one with multimodal structure, for supervised classification or unsupervised anomaly detection tasks. We further extended our experiments across different data dimensions and IQP feature-map configurations (i.e. entanglement level and repetitions).
Three distinct operating regimes emerged: for small bandwidth values, IQP kernels behave qualitatively similarly to Gaussian kernels, as previously shown by \cite{Alban_2025}; for intermediate bandwidth values, the RKHS generators become multimodal, enabling the kernel to capture longer-range interactions between data points and yielding improved performance on datasets with multimodal structure. Finally, for sufficiently large bandwidth values, the periodic nature of the RKHS generators becomes dominant, allowing quantum kernel methods to effectively model periodic patterns whenever the periods present in the data are aligned with those induced by the feature map.
While the theoretical findings highlight specific links between the properties of the data and those of IQP kernels, we acknowledge that the numerical experiments are restricted to synthetic datasets with controlled periodic or multimodal structure; assessing how these regimes translate to real-world data is a fundamental next step that we plan to investigate in future work. Furthermore, the theoretical characterisation applies specifically to IQP feature maps, while further investigation is required to extend it to other families of quantum feature maps.
Nevertheless, the search of a connection between the structural properties of data and the mathematical properties of quantum models is a central open question for the field \cite{bowles2024betterclassicalsubtleart}. In this regard, the here presented theoretical and experimental results establish a direct connection between the structure of the RKHS, the hyperparameters of the IQP feature map, and the characteristics of the data. Rather than viewing the bandwidth solely as a tuning parameter, we have found that it determines the RKHS structure and the classes of datasets benefit from IQP-based quantum kernel methods.

\section*{Acknowledgement}
We acknowledge the CINECA award under the ISCRA initiative, for the availability of high performance computing resources and support.
\newline
\section*{Declarations}
\subsection*{Funding}
The European Union - Next Generation EU, in the context of PNRR, Investment 1.5 Ecosystems of Innovation, Project Tuscany Health Ecosystem (THE), ECS00000017, Spoke 3. CUP: B83C22003930001 (to VR and RDA) 
\subsection*{Conflict of interest}
The authors have no competing interests to declare that are relevant to the content of this article.
\subsection*{Data and code availability }
The data, code, and experimental results associated with this project are available on GitHub at the following link: \url{https://github.com/stefanomelchionna/Expressivity_of_IQP_Feature_Map_QKs_via_RKHS/}
\subsection*{Authors contribution}
SM and VR conceived and conducted the study, performed computation and analyses, interpreted the results, and drafted the manuscript; RDA supervised the study, interpreted the results, and reviewed the manuscript. All authors approved the final manuscript.
\bibliography{bibliography}
\clearpage
\onecolumngrid
\section*{Supplementary Material}
\label{sec:supp_hyperparameters}

\renewcommand{\thefigure}{S\arabic{figure}}
\setcounter{figure}{0}

This section reports the test-set F1-score of QKs separately for each number of repetitions $r \in \{1,2,3\}$ and each entanglement strategy (\textit{linear} and \textit{full}), for the periodic datasets with $b=\omega$ (Figures \ref{fig:supp_S1} for classification and \ref{fig:supp_S2}  for anomaly detection) and for the multimodal datasets (Figures \ref{fig:supp_S3}  for classification and \ref{fig:supp_S4} for anomaly detection).

\begin{figure}[!h]
    \centering
    \includegraphics[width=0.75\textwidth]{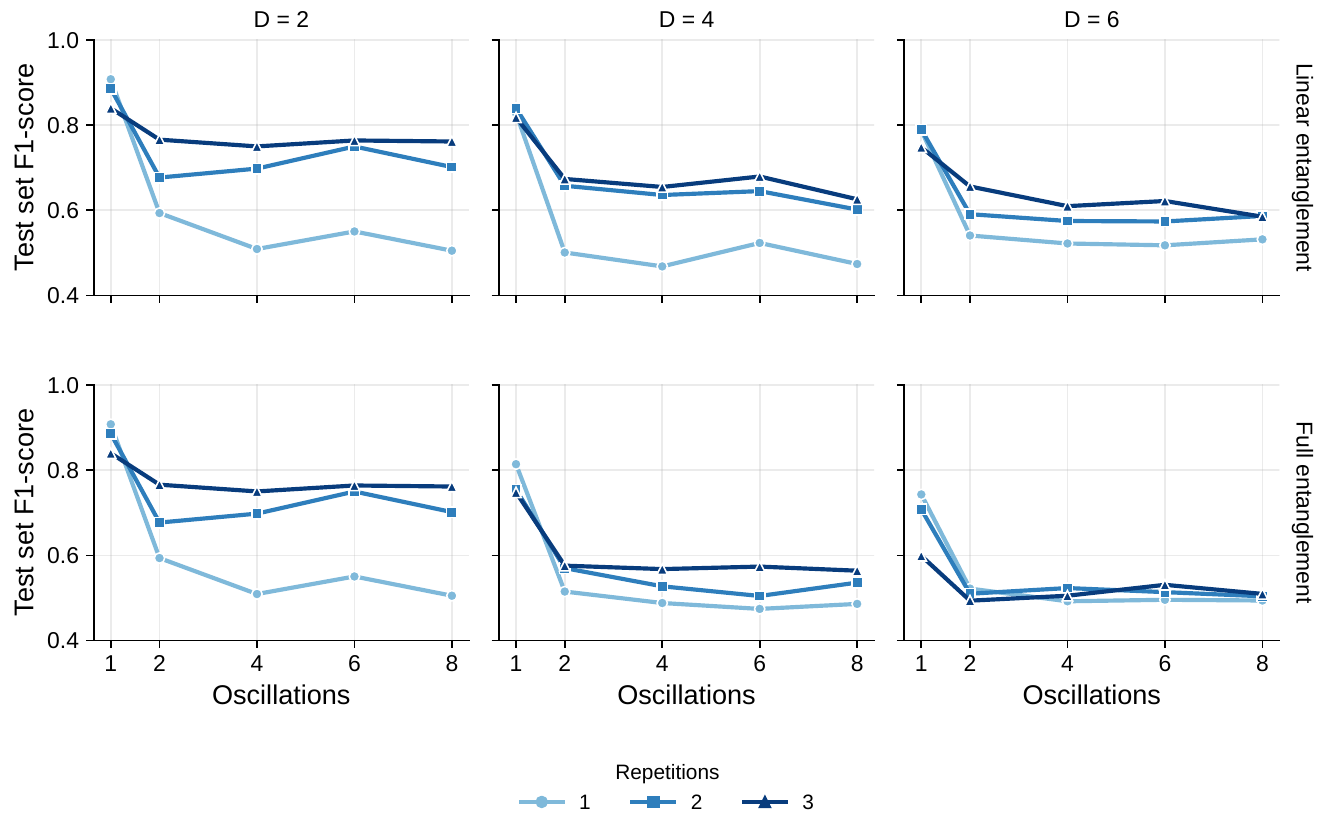}
    \caption{Test-set F1-score for the periodic classification task with $b=\omega$, as a function of the oscillation frequency $\omega$. Each column corresponds to a different dimensionality $D$ (equal to the number of qubits), the top row to \textit{linear} and the bottom row to \textit{full} entanglement. Lines are coloured according to the number of repetitions $r \in \{1,2,3\}$.}
    \label{fig:supp_S1}
\end{figure}

\begin{figure}[!h]
    \centering
    \includegraphics[width=0.75\textwidth]{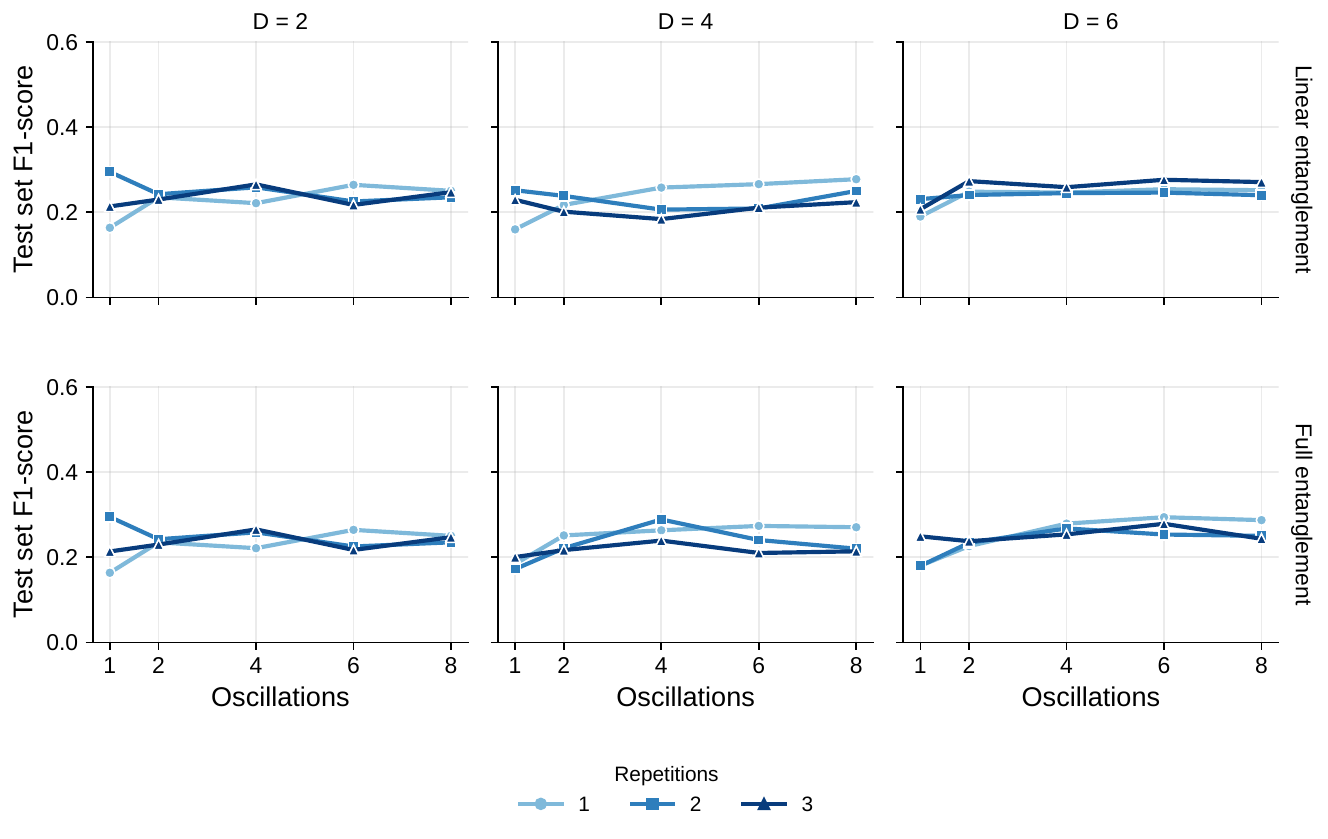}
    \caption{Test-set F1-score for the periodic anomaly detection task with $b=\omega$, as a function of the oscillation frequency $\omega$. Each column corresponds to a different dimensionality $D$ (equal to the number of qubits), the top row to \textit{linear} and the bottom row to \textit{full} entanglement. Lines are coloured according to the number of repetitions $r \in \{1,2,3\}$.}
    \label{fig:supp_S2}
\end{figure}

\begin{figure}[!h]
    \centering
    \includegraphics[width=0.8\textwidth]{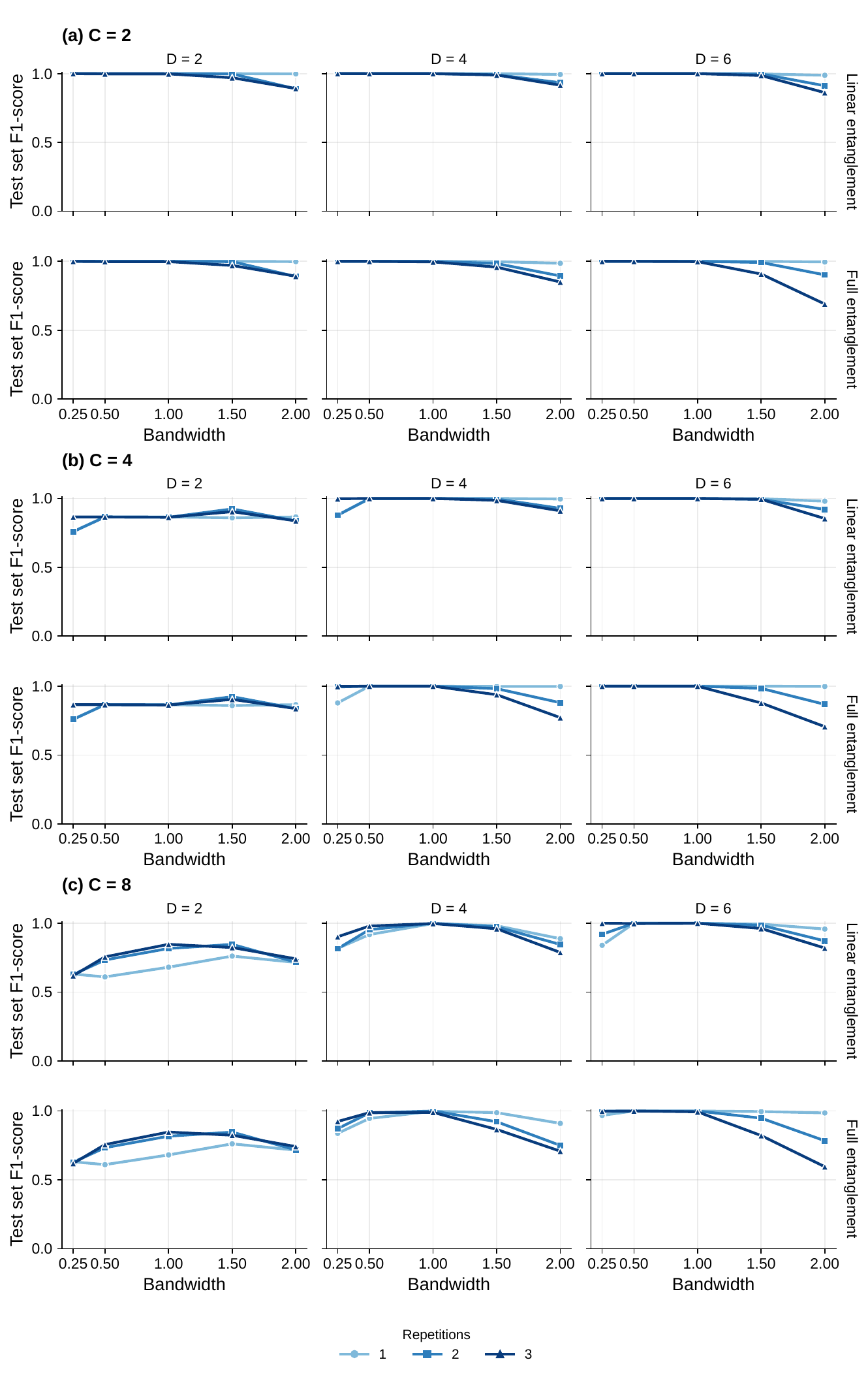}
    \caption{Test-set F1-score for the multimodal classification task as a function of the bandwidth $b$. Panel (a), (b) and (c) correspond to datasets generated with $C=2$, $4$ and $8$ centres per class (degree of multimodality). In each panel, columns correspond to a different dimensionality $D$ (equal to the number of qubits), the upper row to \textit{linear} and the lower row to \textit{full} entanglement. Lines are coloured according to the number of repetitions $r \in \{1,2,3\}$.}
    \label{fig:supp_S3}
\end{figure}

\begin{figure}[!h]
    \centering
    \includegraphics[width=0.8\textwidth]{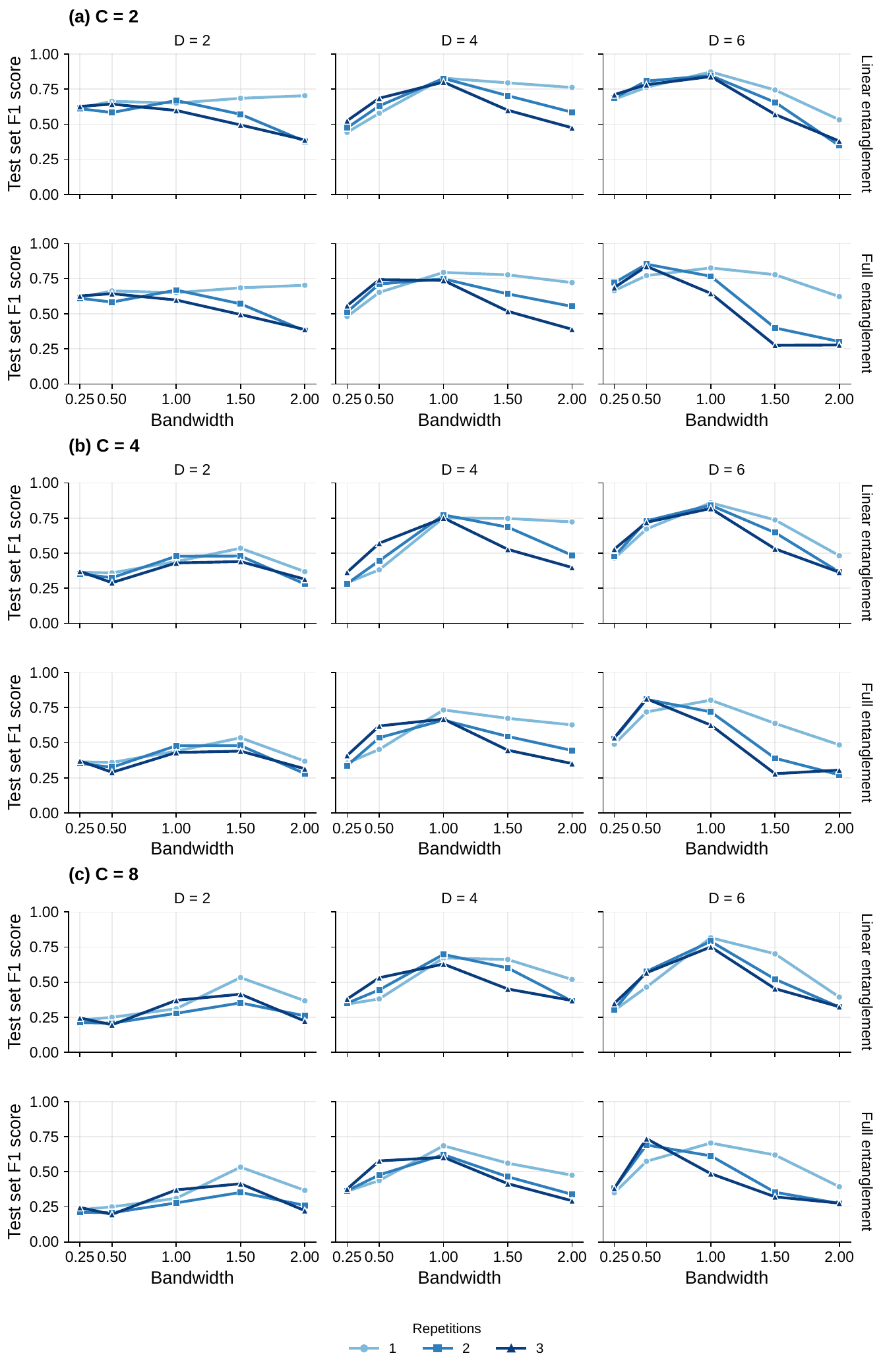}
    \caption{Test-set F1-score for the multimodal anomaly detection task as a function of the bandwidth $b$. Panel (a), (b) and (c) correspond to datasets generated with $C=2$, $4$ and $8$ centres (degree of multimodality). In each panel, columns correspond to a different dimensionality $D$ (equal to the number of qubits), the upper row to \textit{linear} and the lower row to \textit{full} entanglement. Lines are coloured according to the number of repetitions $r \in \{1,2,3\}$.}
    \label{fig:supp_S4}
\end{figure}
\end{document}